\newtheorem{definition}{Definition}
\newtheorem{lemma}{Lemma}
\newtheorem{theorem}{Theorem}
\newenvironment{proof}{\noindent {\textbf{Proof:}}\rm}{\hfill $\Box$
\rm\bigskip}
\title{A Linear-Time Algorithm for Discrete Radius Optimally Augmenting Paths in a Metric Space}
\author{
Haitao Wang\thanks{Department of Computer Science,
Utah State University, Logan, UT 84322, USA. {\tt haitao.wang@usu.edu}}
\and
Yiming Zhao\thanks{Corresponding author. Department of Computer Science,
Utah State University, Logan, UT 84322, USA. {\tt yiming.zhao@aggiemail.usu.edu}}
}
\begin{document}
\pagestyle{plain}
\date{}

\thispagestyle{empty}
\maketitle

\begin{abstract}

Let $P$ be a path graph of $n$ vertices embedded in a metric space. We consider the problem of adding a new edge to $P$ so that the radius of the resulting graph is minimized, where any center is constrained to be one of the vertices of $P$. Previously, the ``continuous'' version of the problem where a center may be a point in the interior of an edge of the graph was studied and a linear-time algorithm was known. Our ``discrete'' version of the problem has not been studied before. We present a linear-time algorithm for the problem.
\end{abstract}

\section{Introduction}
\label{sec:Introduction}

Let $P$ be a path graph of $n$ vertices embedded in a metric space. We wish to  add a new edge to $P$ so that the radius of the resulting graph is minimized, where
any center of the graph is constrained to be one of the vertices of $P$. The problem is formally defined as follows.

Let $\{v_1, v_2, ..., v_n\}$ be the set of vertices of $P$ along $P$. For each $i \in [1, n - 1]$, let $e(v_{i}, v_{i + 1})$ denote the edge connecting  $v_{i}$ and $v_{i + 1}$. We assume that $P$ is embedded in a metric space and $|v_iv_j|$ is the distance between two vertices $v_i$ and $v_j$, such that the following properties hold: (1) $|v_iv_j| = 0$ if and only if $i = j$; (2) $|v_iv_j| = |v_jv_i| \geq 0$; (3) $|v_iv_k| + |v_kv_j| \geq |v_iv_j|$ for any $v_k$ (i.e., the triangle inequality). Note that the length of each edge $e(v_{i}, v_{i+1})$ for $i \in [1, n - 1]$ in $P$ is equal to $|v_{i}v_{i + 1}|$. We assume that the distance $|v_iv_j|$ can be obtained in $O(1)$ time for any two vertices $v_i$ and $v_j$ of $P$.

Let $P \cup \{e(v_{i}, v_{j})\}$ denote the resulting graph (also called {\em augmenting path}) after adding a new edge $e(v_{i}, v_{j})$ connecting two vertices (i.e., $v_{i}$ and $v_{j}$) of $P$.
A vertex $c$ of $P$ is called a \emph{center} of the new graph $P \cup \{e(v_{i}, v_{j})\}$ if it minimizes the maximum length of the  shortest paths from $c$ to all vertices in the graph, and the maximum shortest path length is called the \emph{radius} of $P \cup \{e(v_{i}, v_{j})\}$. The problem is to add a new edge $e(v_{i}, v_{j})$ such that the radius of $P \cup \{e(v_{i}, v_{j})\}$ is minimized, among all vertex pairs $(v_i,v_j)$ with $1\leq i<j\leq n$. We call the problem {\em discrete radius optimally augmenting path} problem (or discrete-ROAP for short).

To the best of our knowledge, the problem has not been studied before in the literature. In this paper, we present an $O(n)$ time algorithm for the problem.

\subsection{Related work}
\label{subsec:RelatedWork}

Johnson and Wang \cite{ref:JohnsonAL19} studied a ``continuous'' version of the problem in which a center may be in the interior of an edge of the graph. In contrast, in our problem any center has to be a vertex of the graph, and thus our problem may be considered a ``discrete'' version.
Johnson and Wang \cite{ref:JohnsonAL19} gave a linear time algorithm for their continuous problem.

A similar problem that is to minimize the diameter of the augmenting path has also been studied. Gro{\ss}e et al. \cite{ref:GrobeFa15} first gave an $O(n \log^3 n)$ time algorithm; later Wang \cite{ref:WangAn18} improved the algorithm to $O(n \log n)$ time.
Variations of the diameter problem (i.e., add a new edge to $P$ to minimize the diameter of the resulting graph) were also considered. If the path $P$ is embedded in the Euclidean space $\mathbb{R}^d$ for a given constant $d$, Gro{\ss}e et al. \cite{ref:GrobeFa15} proposed an algorithm that can compute a $(1 + \epsilon)$-approximate solution for the diameter problem in $O(n + \frac{1}{{\epsilon}^3})$ time, for any $\epsilon > 0$. If $P$ is embedded in the Euclidean plane $\mathbb{R}^2$, De Carufel et al. \cite{ref:DeCarufelMi16} derived a linear-time algorithm for the continuous version of the diameter problem (i.e., the diameter is defined with respect to all points of the graph, including the points in the interior of the graph edges, not just vertices). For a geometric tree $T$ of $n$ vertices embedded in the Euclidean plane $\mathbb{R}^2$, De Carufel et al. \cite{ref:DeCarufelMi17} designed an $O(n \log n)$-time algorithm for adding a new edge to $T$ to minimize the continuous diameter in the new graph. If $T$ is a tree embedded in a metric space, Gro{\ss}e et al. \cite{ref:GrobeFa16} solved the discrete diameter problem in $O(n^2 \log n)$ time; Bil{\`o} \cite{ref:BiloAl18} improved the algorithm to $O(n \log n)$ time. Oh and Ahn \cite{ref:OhA16} considered the diameter problem on a general tree (not necessarily embedded in a metric space) and developed $O(n^2 \log^3 n)$ time algorithms for both the discrete and the continuous versions of the diameter problem; later Bil{\`o} \cite{ref:BiloAl18} gave an improved $O(n^2)$ time algorithm for the discrete diameter problem.

A more general problem is to add $k$ edges to a general graph $G$ such that the diameter of the new graph is minimized. This problem is NP-hard \cite{ref:SchooneDi97} and some variants are even W[2]-hard \cite{ref:FratiAu15, ref:GaoTh13}. Various approximation algorithms are known~\cite{ref:BiloIm12, ref:FratiAu15, ref:LiOn92}. The problem of bounding the diameters of the augmenting graphs have also been studied~\cite{ref:AlonDe00, ref:IshiiAu13}. In a geometric setting, given a circle in the plane, Bae et al. \cite{ref:BaeSh17} considered the problem of inserting $k$ shortcuts (i.e., chords) to the circle to minimize the diameter of the resulting graph.

As a motivation of our problem, we borrow an example from~\cite{ref:JohnsonAL19}. Suppose there is a highway that connects several cities and we want to build a facility along the highway to provide certain service for all these cities; it is required that the facility be located in one of the cities along the highway. In order to reduce the transportation time, one option is to construct a new highway connecting two cities such that the radius (the maximum distance from the facility to all cities) is as small as possible.

\subsection{Our approach}
\label{sec:approach}

Note that the radius of $P\cup \{e(v_i,v_j)\}$ may not be equal to its diameter divided by $2$. For example, suppose $P\cup \{e(v_i,v_j)\}$ is a cycle (i.e., $i=1$ and $j=n$) and all edges of the cycle have the same length; then one can verify that the radius of the graph is equal to its diameter.

To solve our problem, a natural idea is to see whether the algorithm~\cite{ref:JohnsonAL19} for the continuous problem can be used. To this end, two basic questions arise. First, for an augmenting graph $P \cup \{e(v_{i}, v_{j})\}$, how far a continuous center can be from the discrete center? For example, is it the case that if a continuous center lies in the interior of an edge $e$, then one of the two vertices of $e$ must be a discrete center? Second, is it the case that an optimal solution (i.e., the new edge to be added) in the continuous version must also be an optimal solution for the discrete version?

In order to answer these questions, we illustrate two examples.

Figure~\ref{fig:motivation-2} shows an example in which the path $P$ with $10$ vertices is embedded in the Euclidean plane, with $|v_{i}v_{i + 1}| = 1$ for all $1\leq i \leq 9$. Suppose a new edge $e(v_3,v_8)$ is added. It is possible to draw the figure such that $|v_{3}v_{8}| = 4$. One can verify that the only continuous center is the middle point of $e(v_{3}, v_{8})$ (whose farthest vertices are $\{v_{1}, v_{5}, v_{6}, v_{10}\}$) and the continuous radius is $4$. Either $v_{5}$ or $v_{6}$ can be a discrete center ($v_5$ has only one farthest vertex $v_{10}$ and $v_6$ has only one farthest vertex $v_1$) and the discrete radius is $5$. This example shows that the discrete center and the continuous center could be ``far from'' each other. Therefore, it is not obvious to us whether/how a continuous center can be used to find a discrete center.

\begin{figure}[t]
    \centering
    \includegraphics[scale=0.58]{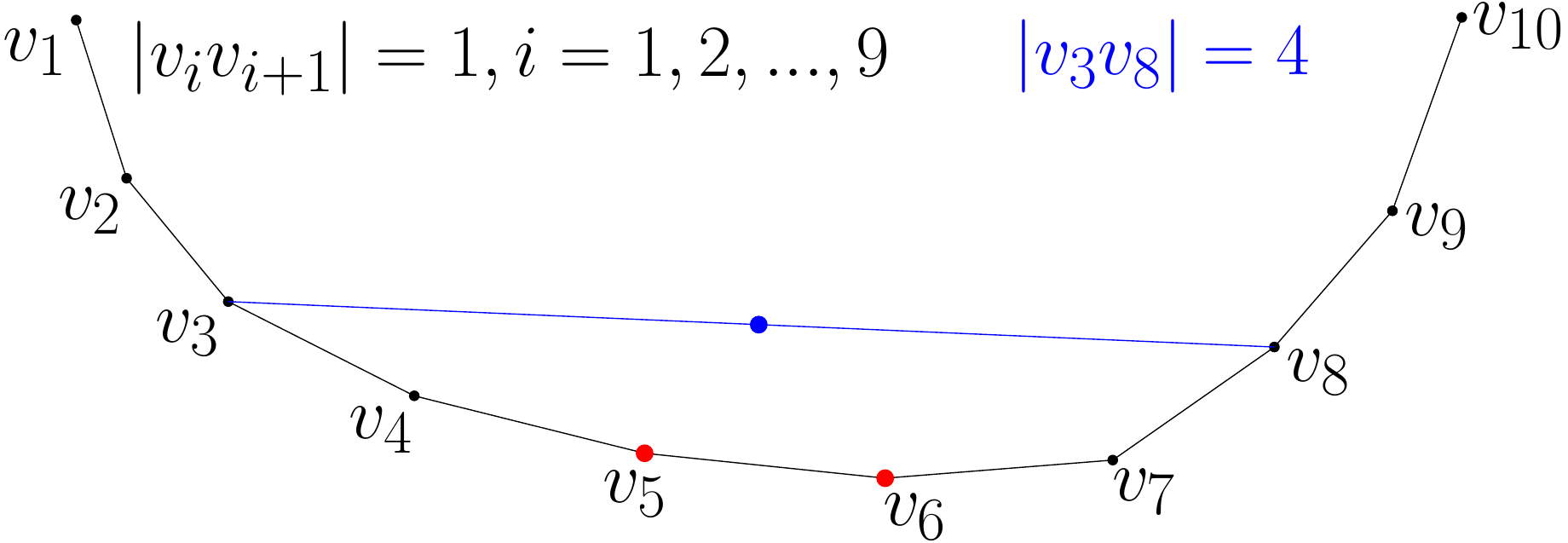}
    \caption{Illustrating the difference between the continuous center and the discrete center. The continuous center is the middle point of the new edge $e(v_{3}, v_{8})$. Either $v_{5}$ or $v_{6}$ can be a discrete center.
    }
    \label{fig:motivation-2}
\end{figure}

Figure~\ref{fig:motivation-1} shows an example in which the path $P$ with $10$ vertices is embedded in the Euclidean plane, with $|v_{i}v_{i + 1}| = 1$ for all $1\leq i\leq 9$. It is possible to draw the figure such that $|v_{3}v_{8}| = 4$, $|v_{4}v_{7}| > 2$, $|v_{5}v_{10}| > |v_{1}v_{6}|$, and $4 < |v_{1}v_{6}| < 5$. For the continuous problem, an optimal solution is to add the edge $e(v_3, v_8)$, after which the continuous center of the new graph is the middle point of $e(v_{3}, v_{8})$ (which has four farthest vertices $\{v_{1}, v_{5}, v_{6}, v_{10}\}$) and the continuous radius is $4$. For the discrete problem, an optimal solution is to add the edge $e(v_1, v_6)$, after which the discrete center of the new graph is $v_{6}$ (which has only one farthest vertex $v_{1}$) and the discrete radius is equal to $|e(v_{1}, v_{6})|$, which is larger than $4$. Note that $e(v_5,v_{10})$ is not an optimal solution due to $|v_{5}v_{10}| > |v_{1}v_{6}|$. This example shows that optimal solutions of the two versions of the problem could be very different. Therefore, it is not obvious to us whether/how a continuous optimal solution can be used to find a discrete optimal solution.

\begin{figure}[t]
    \centering
    \includegraphics[scale=0.58]{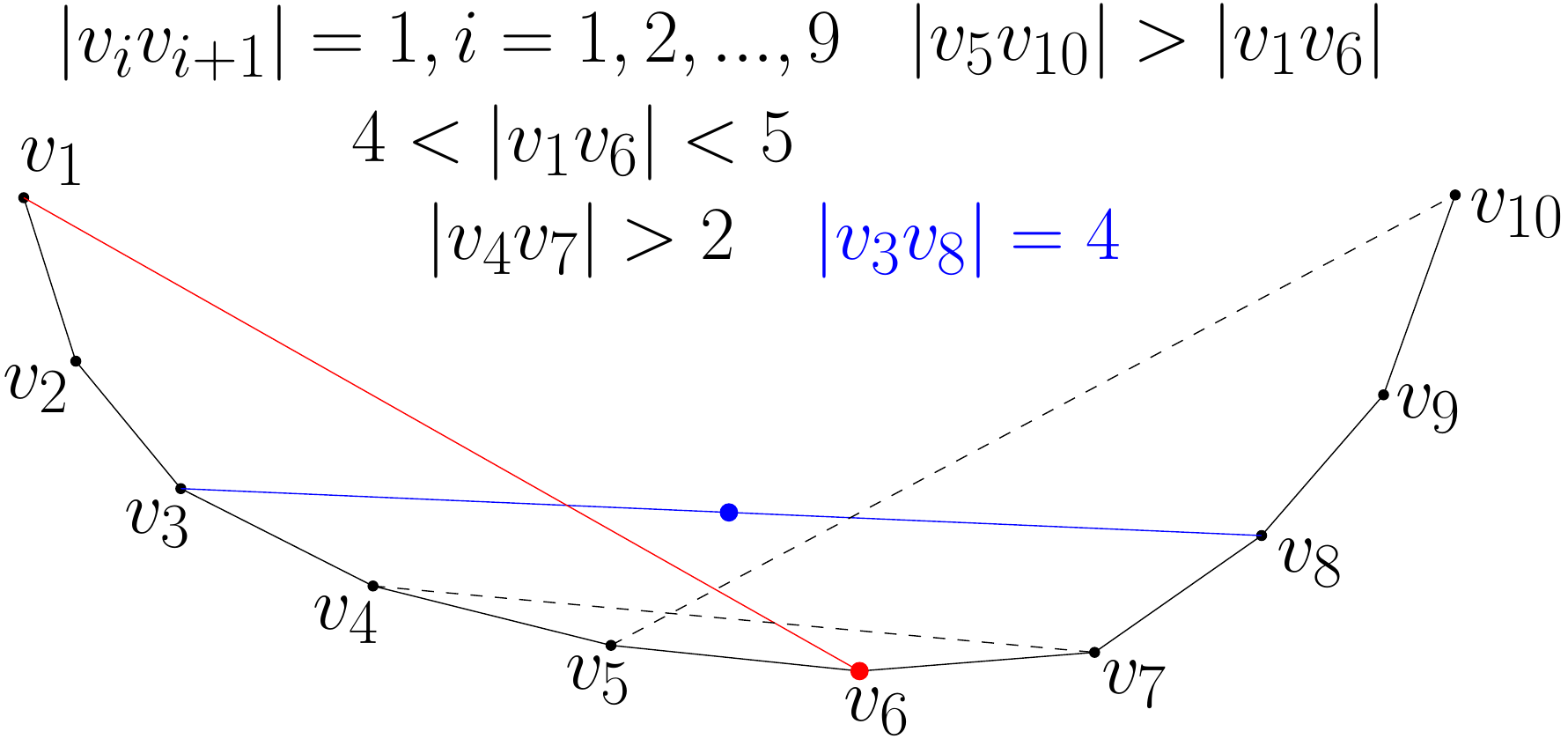}
    \caption{Illustrating the difference between an optimal solution of the continuous problem and that of the discrete problem. For the continuous problem, an optimal solution is to add the edge $e(v_3, v_8)$, after which the continuous center of the new graph is the middle point of $e(v_{3}, v_{8})$ and the continuous radius is $4$.
    For the discrete problem, an optimal solution is to add the edge $e(v_1, v_6)$, after which the discrete center of the new graph is $v_{6}$ and the discrete radius is equal to $|e(v_{1}, v_{6})|$, which is larger than $4$.}
    \label{fig:motivation-1}
\end{figure}

The above examples demonstrate that using the algorithm in~\cite{ref:JohnsonAL19} directly to solve the discrete problem seems not possible. Instead, we design a new algorithm. Our algorithm still share some similarities with that in~\cite{ref:JohnsonAL19} in the following sense. In the continuous case, any center must have two different farthest vertices in the augmenting graph. Based on the location of the center, the locations of the two farthest vertices, and the shortest paths from the center to the two farthest vertices in an optimal solution, the algorithm in~\cite{ref:JohnsonAL19} considers a constant number of configurations, and in each configuration the algorithm computes a candidate solution such that if an optimal solution conforms to the configuration, then the candidate solution is an optimal solution. In our discrete case, we also consider a constant number of configurations and process the configurations in the same way as above. However, the major difference is that the definitions of the configurations in our problem are quite different from those in~\cite{ref:JohnsonAL19}. Indeed, in our problem, a center may have only one farthest vertex.
%However, $c$ is actually determined not only by $v$, and there has to be another ``mysterious'' vertex $v'$ that determines $c$ together with $v$. As an easy example, consider the path $P$ and let $c$ be a center of $P$; we assume that $c$ has only one farthest vertex $v$, which must either $v_1$ or $v_n$. Then the mysterious vertex $v'$ is the vertex $\{v_1,v_n\}\setminus \{v\}$. Therefore, for our problem, in order to define configurations, we first need to formally define the mysterious vertex $v'$.
Therefore, the configurations in our problem are defined with respect to the locations of the center and a single farthest vertex, as well as their shortest path.
%Hence, this mysterious vertex makes the discrete problem more challenging to solve.
In addition, unlike those in~\cite{ref:JohnsonAL19}, we do not need to consider the configurations where the center is in the interior of the new added edge. For this reason, we have much fewer configurations than those in~\cite{ref:JohnsonAL19}.

The rest of the paper is organized as follows. Section~\ref{sec:PreliminaryWork} introduces notation and definitions. Our algorithm is described in Section~\ref{sec:OurApproach}. Section~\ref{sec:conclude} concludes with some remarks.

\section{Preliminaries}
\label{sec:PreliminaryWork}

Unless otherwise stated, for any index pair $(i,j)$ or vertex pair $(v_i,v_j)$ used in our discussion, we assume that $1\leq i\leq j\leq n$.

For any two vertices $v_i$ and $v_j$ of the path $P$, we use $P(v_{i}, v_{j})$ to refer to the subpath of $P$ from $v_{i}$ to $v_{j}$ inclusively.

For any index pair $(i,j)$, define $G(i, j) = P \cup \{e(v_{i}, v_{j})\}$, i.e., the new graph after a new edge $e(v_{i}, v_{j})$ is added to the path $P$. Note that if $j=i$ or $j=i+1$, then $G(i,j)$ is $P$. Define $C({i}, {j}) = P(v_{i}, v_{j}) \cup e(v_{i}, v_{j})$, which is a cycle formed by a new edge $e(v_{i}, v_{j})$ and the subpath $P(v_{i}, v_{j})$.

For any graph $G$ used in our discussion (e.g., $G$ is $G(i,j)$, $C(i,j)$, or $P$) and any two vertices $v$ and $v'$ of $G$, we use $d_G(v,v')$ to denote the length of any shortest path from $v$ to $v'$ in $G$ and we also refer to $d_G(v,v')$ as the {\em distance} from $v$ to $v'$ in $G$. Following this definition, $d_{P}(v_{i}, v_{j})$ is the length of the subpath $P(v_i,v_j)$ and  $d_{C(i, j)}(v, v')$ is the distance between $v$ to $v'$ in the cycle $C(i, j)$. For any path $\pi$ in $G$, we use $|\pi|$ to denote the length of $\pi$. We also use $|C(i,j)|$ to denote the total length of all edges of the cycle $C(i,j)$.

Our algorithm will frequently compute $d_{P}(v_{i}, v_{j})$ for any index pair $(i,j)$. This can be done in constant time after $O(n)$ time preprocessing, e.g., compute the prefix sum $d_P(v_1,v_k)$ for all $1\leq k\leq n$.

%For convenience of discussion, we assume the vertices $\{v_{1}, v_{2}, ..., v_{n}\}$ of $P$ are ordered from left to right. Thus, $v_j$ is to the right of $v_i$ if and only if $i<j$.
%Similarly, each edge $e(v_{i}, v_{j})$ with $i < j$ from $v_{i}$ to $v_{j}$ is oriented from left to right.

For any vertex $v$ of any graph $G$ used in our discussion, a vertex $v'$ of $G$ is called a {\em farthest vertex} of $v$ if it maximizes $d_G(v,v')$. A vertex $v_{c}$ of $G$ is called a {\em center} if its distance to its farthest vertex is minimized, and the distance from $v_{c}$ to its farthest vertex is called the {\em radius} of $G$. Therefore, our problem is to find an index pair $(i, j)$ such that the radius of $G(i, j)$ is minimized.

Let $(i^*, j^*)$ denote an optimal solution (with $i^* < j^*$), i.e., $e(v_{i^*},v_{j^*})$ is the new edge to be added. Let $c^*$ denote the index of a center of $G(i^*, j^*)$, $r^*$ the radius of $G(i^*,j^*)$, $a^*$ the index of a farthest vertex of $v_{c^*}$ in $G(i^*,j^*)$, and $\pi^*$ a shortest path from $v_{c^*}$ to $v_{a^*}$ in $G(i^*,j^*)$. Note that the center of $G(i^*, j^*)$ may not be unique, in which case we use $c^*$ to refer to an arbitrary one, but once $c^*$ is fixed we will never change it throughout the paper. So as $a^*$ and $\pi^*$. Note that $c^*\neq a^*$ since otherwise the graph would have only one vertex.

% We proceed to define the mysterious vertex $b^*$ as discussed in Section~\ref{sec:approach}.

% Let $c'$ be the index of the neighboring vertex of $v_{c^*}$ in $\pi^*$. Because $\pi^*$ has at least two vertex $c^*$ and $a^*$, $v\neq v_{c^*}$. Also, because $v_{c^*}$ is the end vertex of $\pi^*$, $c'$ is unique.
% \begin{definition}[$c'$-farthest vertex]
%     \label{def:cprime-farthest}
%     Define $b^*$ as the index of a farthest vertex of $v_{c'}$ in $G(i^*,j^*)$. We call $b^*$ the $c'$-farthest vertex.
%     \end{definition}

% We have the following observation for $b^*$.

% \begin{observation}
%     \label{observation:cprime-farthest}
%     $d_{G(i^*, j^*)}(v_{c'}, v_{b^*}) \geq r^*$ and $v_{b^*}\not\in \pi^*$.
% \end{observation}

% \begin{proof}
%     \label{proof:cprime-farthest}
%     $d_{G(i^*, j^*)}(v_{c'}, v_{b^*}) \geq r^*$ holds because of the definition of $r^*$. To prove $v_{b^*}\not\in \pi^*$, assume to the contradiction that $v_{b^*}\in \pi^*$. Then, by definition of $\pi^*$, the length of $\pi^*$ is exactly equal to $r^*$. Also, the subpath of $\pi^*$ between $v_{b^*}$ and $v_{c'}$ does not contain the edge $e(v_{c^*},v_{c'})$ and thus the length of the subpath is absolutely smaller than $r^*$. As $d_{G(i^*, j^*)}(v_{c'}, v_{b^*})$ is smaller or equal to the length of the above subpath, we obtain that $d_{G(i^*, j^*)}(v_{c'}, v_{b^*})<r^*$, incurring contradiction.
% \end{proof}

\section{The Algorithm}
\label{sec:OurApproach}

% \subsection{The Configurations of Proposed Algorithm}
% \label{subsec:configurations}

As discussed in Section~\ref{sec:approach}, we consider a constant number of configurations for the optimal solution $G(i^*,j^*)$. For each configuration, we compute in $O(n)$ time a candidate solution (consisting of an index pair $(i',j')$, a candidate center $c'$ and a candidate radius $r'$) such that if the optimal solution conforms to the configuration, then our candidate solution is an optimal one, i.e., $r^*=r'$. On the other hand, the candidate solution is a {\em feasible} one, i.e., the distances from $c'$ to all vertices in $G(i',j')$ are at most $r'$.
%The total number of configurations is constant.

In the following, we first give an overview of all configurations and then present algorithms to compute candidate solutions for them.

\subsection{Configuration overview}
The configurations are defined with respect to the locations of  $v_{a^*}$ and $v_{c^*}$ as well as whether the path $\pi^*$ contains the new edge $e(v_{i^*},v_{j^*})$.
%In the following, we first give an overview on all configurations.

Depending on whether $c^*\in (i^*,j^*)$, there are two main cases.
%\textbf{Case 1: $c^*\not\in (i^*,j^*)$} and \textbf{Case 2: $c^*\in (i^*,j^*)$}. For Case 1, there are further two subcases: \textbf{Case 1.1: $c^*\in [1,i^*]$} and \textbf{Case 1.2: $c^*\in [j^*,n]$}. We notice that Case 1.2 is symmetric to Case 1.1 and thus it suffices to discuss Case 1.1.
% For Case 2,

\begin{description}
\item[Case 1: $c^*\not\in (i^*,j^*)$.]

In this case, $c^*$ is either in $[1,i^*]$ or in $[j^*,n]$. Hence, there are two subcases.

\begin{description}
\item[Case 1.1:] $c^*\in [1,i^*]$. See Fig.~\ref{fig:case1-1}.

\item[Case 1.2:] $c^*\in [j^*,n]$.

This case is symmetric to Case 1.1.
\end{description}

\item[Case 2: $c^*\in (i^*,j^*)$.]

Notice that $a^*$ cannot be in $[2,i^*]\cup [j^*,n-1]$. Hence, there are three subcases $a^*=1$, $a^*=n$, and $a^*\in (i^*,j^*)$.

\begin{description}
\item[Case 2.1:] $a^*=1$.

This case further has two subcases depending on whether the new edge $e(v_i^*,v_j^*)$ is contained in the path $\pi^*$.

\begin{description}
\item[Case 2.1.1:] $e(v_i^*,v_j^*)\subseteq \pi^*$. See Fig.~\ref{fig:case2-1-3-1}.
\item[Case 2.1.2:] $e(v_i^*,v_j^*)\not\subseteq \pi^*$. See Fig.~\ref{fig:case2-1-3-2}.
\end{description}

\item[Case 2.2:] $a^*=n$. This case is symmetric to Case~2.1.

\item[Case 2.3:] $a^*\in (i^*,j^*)$.
% Notice that $b^*$ cannot be in $[2,i^*]\cup [j^*,n-1]$. Hence, there are three subcases: $b^*\in [i^*+1,j^*-1]$, $b^*=1$, and $b^*=n$.

% \begin{description}
% \item[Case 2.3.1:] $b^*\in [i^*+1,j^*-1]$.
% This case further has two subcases depending on whether the new edge $e(v_i^*,v_j^*)$ is contained in the path $\pi^*$.
% \begin{description}
% \item[Case 2.3.1.1:] $e(v_i^*,v_j^*)\subseteq \pi^*$. See Fig.~\ref{fig:case2-3-1-1}.
% \item[Case 2.3.1.2:] $e(v_i^*,v_j^*)\not\subseteq \pi^*$. See Fig.~\ref{fig:case2-3-1-2}.
% \end{description}

% \item[Case 2.3.2:] $b^*=1$.
% Depending on whether $a^*<c^*$ or $a^*>c^*$ (recall that $a^*\neq c^*$), there are further two subcases.

% \begin{description}
% \item[Case 2.3.2.1:] $a^*<c^*$. See Fig.~\ref{fig:case2-3-2-1}.
% \item[Case 2.3.2.2:] $a^*>c^*$. See Fig.~\ref{fig:case2-3-2-2}.
% \end{description}
% \item[Case 2.3.3:] $b^*=n$.
% \end{description}

\end{description}
\end{description}

In fact, we will only compute candidate solutions for Case 1.1 and Case 1.2. We will show that other cases can be reduced to these two cases (i.e., if any case other than Case 1.1 and Case 1.2 has an optimal solution, then one of Case 1.1 and Case 1.2 must have an optimal solution).

\subsection{Computing candidate solutions}
We are now in a position to describe our algorithms for computing candidate solutions.

\subsubsection*{Case 1: $c^*\not\in(i^*,j^*)$.}
\label{subsubsec:case1}

Depending on whether $c^* \in [1,i^*]$ or $c^* \in [j^*,n]$, there are two subcases.

\subsubsection*{Case 1.1: $c^* \in [1, i^*]$.}
\label{subsecsec:case1-1}

\begin{figure}[t]
    \centering
    \includegraphics[scale=0.55]{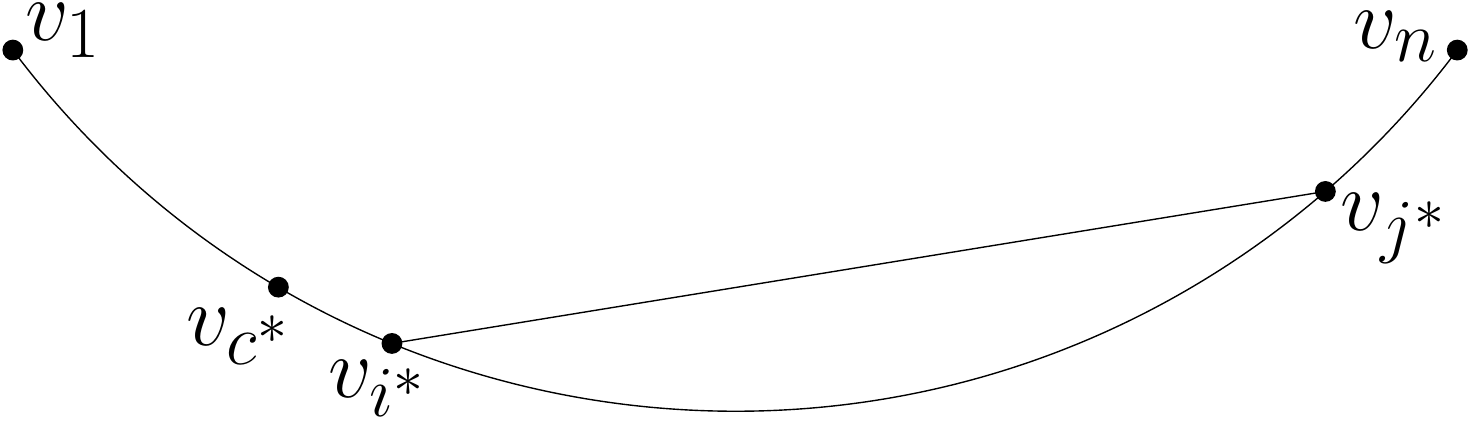}
    \caption{Illustrating the configuration for Case 1.1, where $c^* \in [1, i^*]$.}
    \label{fig:case1-1}
\end{figure}

Refer to Fig.~\ref{fig:case1-1}. In this case, it is not difficult to see
that either $a^*=1$ or $a^*\in [i^*,n]$ and thus the radius $r^*$ is equal to $$\max \{d_{P}(v_{1}, v_{c^*}), d_{P}(v_{c^*}, v_{i^*}) + \max_{k \in [i^*, n]} d_{G({i^*}, {j^*})} (v_{i^*}, v_{k})\}.$$

% Also, $j^*$ must be the index $j \in [i^*, n]$ that minimizes the maximum distance from $v_{i^*}$ to the vertices on $P(v_{i^*}, v_{n})$, i.e.,
% \begin{gather*}
%     \label{formula:jstar}
%     j^* = \arg \min_{j \in [i^*, n]} \max_{k \in [i^*, n]} d_{G(v_{i^*}, v_{j^*})} (v_{i^*}, v_{k})
% \end{gather*}

%We use some definitions from \cite{ref:JohnsonAL19} as follows,
\begin{definition}
For each $i\in [1,n-1]$, define
    \begin{gather*}
        \lambda_{i} = \min_{j \in [i, n]} \max_{k \in [i, n]} d_{G(i, j)} (v_{i}, v_{k}), \\
        j_i = \arg \min_{j \in [i, n]} \max_{k \in [i, n]} d_{G(i, j)} (v_{i}, v_{k}),\\
        r_i = \min_{k\in [1,i]}\max\{d_P(v_1,v_k),d_P(v_k,v_i)+\lambda_i\},\\
        c_i = \arg \min_{k\in [1,i]}\max\{d_P(v_1,v_k),d_P(v_k,v_i)+\lambda_i\}.
    \end{gather*}
\end{definition}

The values $\lambda_i$ and $j_i$ were also used for solving the continuous problem in~\cite{ref:JohnsonAL19}, where an algorithm was given that can compute $\lambda_{i}$ and $j_i$ for all $i=1,2,\ldots, n-1$ in $O(n)$ time. For our discrete problem, we also need to compute $r_i$ and $c_i$ for all $i=1,2,\ldots, n-1$. To this end, we propose an $O(n)$-time algorithm in Lemma~\ref{lem:10}.

\begin{lemma}\label{lem:10}
The values $r_i$ and $c_i$ for all $i=1,2,\ldots, n-1$ can be computed in $O(n)$ time.
\end{lemma}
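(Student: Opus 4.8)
The plan is to reduce the computation of each $r_i$ to a one-dimensional search and then to perform all $n-1$ searches by a single monotone sweep. Writing $f(k)=d_P(v_1,v_k)$, which is non-decreasing in $k$, and noting that for $k\in[1,i]$ we have $d_P(v_k,v_i)=d_P(v_1,v_i)-f(k)$, the quantity inside the definition of $r_i$ becomes $\max\{f(k),\,A_i-f(k)\}$, where $A_i=d_P(v_1,v_i)+\lambda_i$. Since $f(k)$ is non-decreasing while $A_i-f(k)$ is non-increasing in $k$, the function $\max\{f(k),A_i-f(k)\}$ is minimized at the ``crossover'' index; concretely, if $k_1(i)$ denotes the largest index $k\in[1,i]$ with $2f(k)\le A_i$ (such an index always exists because $f(1)=0$ and $A_i\ge 0$), then the minimizer $c_i$ is either $k_1(i)$ or $k_1(i)+1$ (when the latter is at most $i$), and $r_i=\min\{A_i-f(k_1(i)),\,f(k_1(i)+1)\}$ with the obvious boundary convention. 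Thus $r_i$ and $c_i$ can be read off in $O(1)$ time once $k_1(i)$ is known, and the whole task is to compute $k_1(i)$ for all $i$.

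To obtain $k_1(i)$ for all $i$ in $O(n)$ total time, I would maintain a single pointer that only moves forward. This is justified by the key monotonicity claim that $A_i$ is non-decreasing in $i$: since $2f(k)$ is increasing in $k$ and the admissible range $[1,i]$ only grows, a non-decreasing threshold $A_i$ forces $k_1(i)$ to be non-decreasing as well. Hence, processing $i=1,2,\dots,n-1$, it suffices to advance the pointer $k_1$ as long as $k_1+1\le i$ and $2f(k_1+1)\le A_i$, and then to compare the two candidate indices; each advance is charged to a distinct index, so the total work is $O(n)$ after the $O(n)$-time preprocessing that supplies the prefix sums and the values $\lambda_i$.

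The heart of the argument is therefore the monotonicity of $A_i$, which I would establish through the Lipschitz-type bound $\lambda_i\le |v_iv_{i+1}|+\lambda_{i+1}$; combined with $d_P(v_1,v_{i+1})=d_P(v_1,v_i)+|v_iv_{i+1}|$ this yields $A_{i+1}-A_i=|v_iv_{i+1}|+\lambda_{i+1}-\lambda_i\ge 0$. To prove the bound, let $j_{i+1}$ be the optimal far endpoint for index $i+1$, so that $\max_{k\in[i+1,n]}d_{G(i+1,j_{i+1})}(v_{i+1},v_k)=\lambda_{i+1}$, and feed the same endpoint into the graph $G(i,j_{i+1})$, which gives $\lambda_i\le \max_{k\in[i,n]}d_{G(i,j_{i+1})}(v_i,v_k)$. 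It then suffices to show, for every $k\in[i+1,n]$, that $d_{G(i,j_{i+1})}(v_i,v_k)\le |v_iv_{i+1}|+d_{G(i+1,j_{i+1})}(v_{i+1},v_k)$. I expect this inequality to be the main obstacle, because the two graphs contain different shortcut edges, so one cannot simply prepend the edge $e(v_i,v_{i+1})$ to a shortest path of $G(i+1,j_{i+1})$ when that path uses the shortcut $e(v_{i+1},v_{j_{i+1}})$.

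The resolution is a short case analysis driven by the triangle inequality. If a shortest $v_{i+1}$-to-$v_k$ path in $G(i+1,j_{i+1})$ avoids the shortcut, it is a subpath of $P$, and prepending $e(v_i,v_{i+1})$ gives a $P$-path of length $d_P(v_i,v_k)$, which dominates $d_{G(i,j_{i+1})}(v_i,v_k)$. If instead it uses the shortcut, the path decomposes as $v_{i+1}\to v_{j_{i+1}}$ (via the shortcut) followed by a $P$-subpath to $v_k$; replacing the first leg by the route $v_i\to v_{j_{i+1}}$ available in $G(i,j_{i+1})$ and applying $|v_iv_{j_{i+1}}|\le |v_iv_{i+1}|+|v_{i+1}v_{j_{i+1}}|$ (the triangle inequality in the metric space) yields exactly the desired inequality; the cases $k\le j_{i+1}$ and $k>j_{i+1}$ are handled uniformly by the cycle-distance formula on $C(i,j_{i+1})$ and by the fact that $v_{j_{i+1}}$ separates the pendant suffix $[j_{i+1},n]$ from the cycle. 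Once this bound is in place, $A_i$ is non-decreasing, the forward pointer is valid, and the $O(n)$ bound follows.
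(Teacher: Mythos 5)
Your proposal is correct and follows essentially the same route as the paper's proof: the crossover index $k_1(i)$ is exactly the paper's $k_i$, the pointer-sweep rests on the same monotonicity property (your ``$A_i$ is non-decreasing'' is the paper's inequality $d_P(v_1,v_{k_i})\le d_P(v_{k_i},v_{i+1})+\lambda_{i+1}$ rearranged), and both hinge on the key claim $\lambda_i\le |v_iv_{i+1}|+\lambda_{i+1}$, proved by comparing $G(i,j_{i+1})$ with $G(i+1,j_{i+1})$ via the triangle inequality $|v_iv_{j_{i+1}}|\le |v_iv_{i+1}|+|v_{i+1}v_{j_{i+1}}|$. The only cosmetic difference is that you case-split on whether the shortest path in $G(i+1,j_{i+1})$ uses the shortcut, while the paper case-splits on the location of the farthest vertex of $v_i$ in $G(i,j_{i+1})$; the underlying estimates are identical.
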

\begin{proof}
We first compute $\lambda_{i}$ for all $i=1,2,\ldots, n-1$ in $O(n)$ time~\cite{ref:JohnsonAL19}. Note that once $c_i$ for all $i=1,2,\ldots, n-1$ are known, all $r_i$ can be computed in additional $O(n)$ time because $r_i=\max\{d_P(v_1,v_{c_i}),d_P(v_{c_i},v_i)+\lambda_i)\}$. Hence, we will focus on computing $c_i$ below.

For each $i\in [1,n-1]$, define $k_i$ as the largest index $k\in [1,i]$ such that $d_P(v_1,v_k)\leq d_P(v_k,v_i)+\lambda_i$.

We claim that for each $i\in [1,n-1]$, $c_i$ is either $k_i$ or $k_i+1$. Indeed, as $k$ changes $[1,i]$, the value $d_P(v_1,v_k)$ is monotonically increasing while the value $d_P(v_k,v_i)+\lambda_i$ is monotonically decreasing. By the definition of $c_i$ and $k_i$, the claim follows.

In light of the claim, once $k_i$ is known, $c_i$ can be determined in additional $O(1)$ time. In the following, we describe an algorithm to compute $k_i$ for all $i=1,2,\ldots,n-1$ in $O(n)$ time.

We first prove a critical monotonicity property: $k_i\leq k_{i+1}$ for all $i\in [1,n-2]$. To this end, it suffices to show that $d_P(v_1,v_{k_i})\leq d_P(v_{k_i},v_{i+1})+\lambda_{i+1}$. We claim that $\lambda_i\leq d_P(v_i,v_{i+1})+\lambda_{i+1}$. Before proving the claim, we use the claim to prove the monotonicity property:
  \begin{equation*}
  \begin{split}
          d_P(v_1,v_{k_i}) & \leq d_P(v_{k_i},v_i) + \lambda_i\\
        & = d_P(v_{k_i},v_{i+1}) - d_P(v_i,v_{i+1}) +\lambda_i\\
        & \leq d_P(v_{k_i},v_{i+1}) + \lambda_{i+1}.
    \end{split}
    \end{equation*}
The first inequaltiy is due to the definition of $k_i$ while the last inequality is due to the above claim. This proves the monotonicity property. In the following we prove the claim.
The proof involves two graphs $G(i, j_{i+1})$ and $G(i+1, j_{i+1})$, e.g., see Fig.~\ref{fig:twographs}.

\begin{figure}[h]
    \centering
    \includegraphics[scale=0.55]{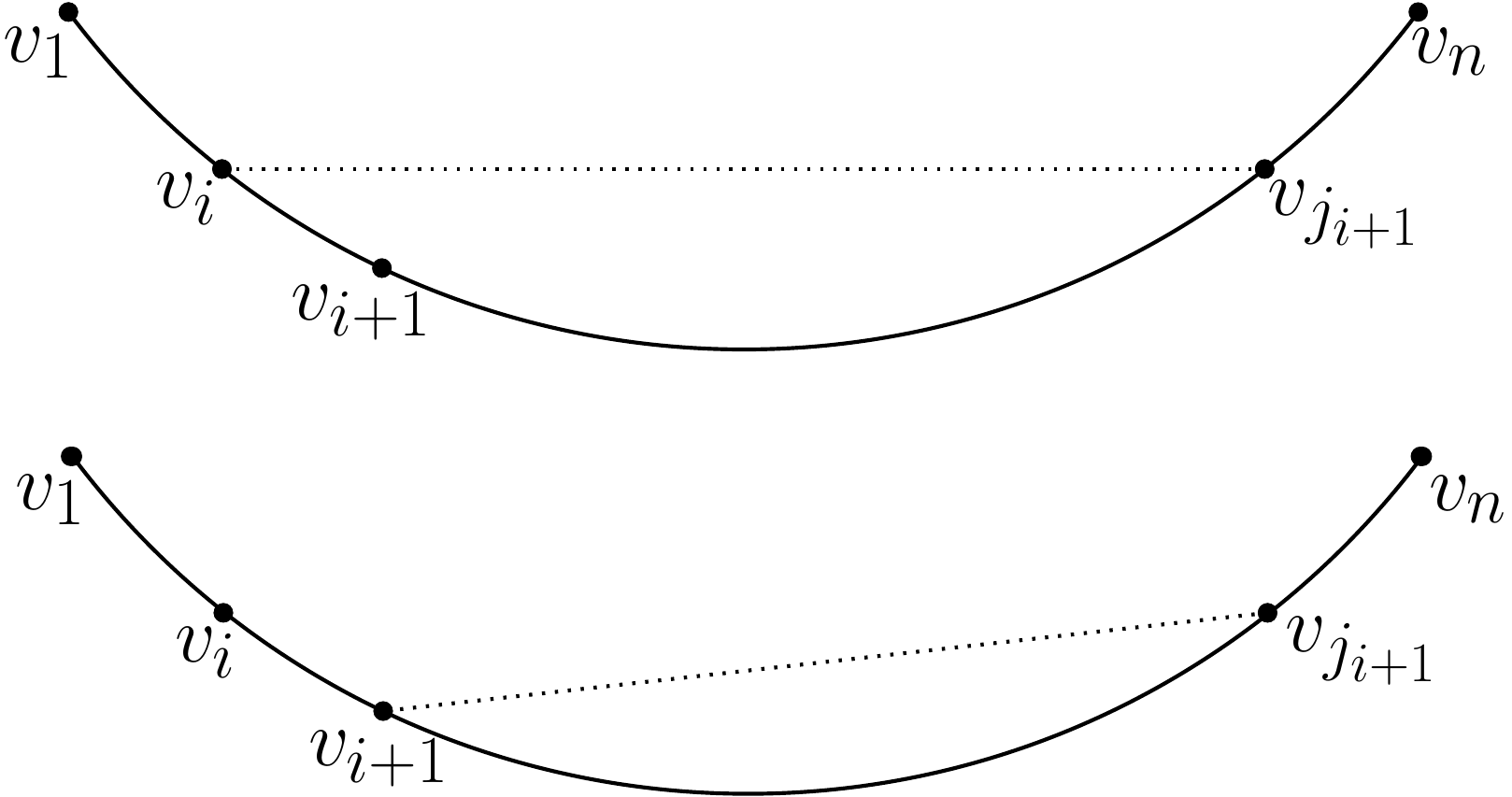}
    \caption{Illustrating the two graphs $G(i, j_{i+1})$ (top) and $G(i+1, j_{i+1})$ (bottom).}
    \label{fig:twographs}
\end{figure}

By definition, $\lambda_i\leq \max_{k \in [i, n]} d_{G(i, j_{i+1})} (v_{i}, v_{k})$. Define $k'=\arg \max_{k \in [i, n]} d_{G(i, j_{i+1})} (v_{i}, v_{k})$. Hence, $\lambda_i\leq d_{G(i, j_{i+1})} (v_{i}, v_{k'})$. It is not difficult to see that either $k'=n$ or $k'\in (i,j_{i+1})$. Below we prove $\lambda_i\leq d_P(v_i,v_{i+1})+\lambda_{i+1}$ for each case.

If $k'=n$, then due to the triangle inequality, $d_{G(i, j_{i+1})} (v_{i}, v_{k'}) = |v_iv_{j_{i+1}}|+d_P(v_{j_{i+1}},v_n)$.
%On the other hand, consider the graph $G(i+1,j_{i+1})$. $d_{G(i+1, j_{i+1})} (v_{i+1}, v_{k'}) = |v_iv_{j_{i+1}}|+d_P(v_{j_{i+1}},v_n)$
Also due to the triangle inequality, $|v_iv_{j_{i+1}}|\leq d_P(v_i,v_{i+1})+|v_{i+1}v_{j_{i+1}}|$ and $d_{G(i+1, j_{i+1})} (v_{i+1}, v_{n}) = |v_{i+1}v_{j_{i+1}}|+d_P(v_{j_{i+1}},v_n)$. In addition, according to the definition of $\lambda_{i+1}$, we have $\lambda_{i+1}=\max_{k \in [i+1, n]} d_{G(i+1, j_{i+1})} (v_{i+1}, v_{k})\geq d_{G(i+1, j_{i+1})} (v_{i+1}, v_{n})$. Combining all above, we can derive
  \begin{equation*}
  \begin{split}
         \lambda_i &\leq d_{G(i, j_{i+1})} (v_{i}, v_{k'}) =          d_{G(i, j_{i+1})} (v_{i}, v_{n}) \\
                & = |v_iv_{j_{i+1}}|+d_P(v_{j_{i+1}},v_n)\\
                &\leq d_P(v_i,v_{i+1})+|v_{i+1}v_{j_{i+1}}|   +d_P(v_{j_{i+1}},v_n)\\
                & = d_P(v_i,v_{i+1})+ d_{G(i+1, j_{i+1})} (v_{i+1}, v_{n})\\
                &\leq d_P(v_i,v_{i+1}) + \max_{k \in [i+1, n]} d_{G(i+1, j_{i+1})} (v_{i+1}, v_{k})\\
                & = d_P(v_i,v_{i+1}) + \lambda_{i+1}.
    \end{split}
    \end{equation*}

We proceed to the case $k'\in (i,j_{i+1})$. Consider the graph $G(i+1,j_{i+1})$. $d_{G(i+1, j_{i+1})} (v_{i+1}, v_{k'})$ is equal to either $d_{P}(v_{i+1},v_{k'})$ or $|v_{i+1}v_{j_{i+1}}|+d_P(v_{k'},v_{j_{i+1}})$.

In the former case, we have
 \begin{equation*}
  \begin{split}
         \lambda_i &\leq d_{G(i, j_{i+1})} (v_{i}, v_{k'}) \leq d_{P} (v_{i}, v_{k'}) \\
                & = d_{P} (v_{i}, v_{i+1}) + d_{P} (v_{i+1}, v_{k'})\\
                & = d_{P} (v_{i}, v_{i+1}) + d_{G(i+1, j_{i+1})} (v_{i+1}, v_{k'})\\
                &\leq d_P(v_i,v_{i+1}) + \max_{k \in [i+1, n]} d_{G(i+1, j_{i+1})} (v_{i+1}, v_{k})\\
                & = d_P(v_i,v_{i+1}) + \lambda_{i+1}.
    \end{split}
    \end{equation*}

In the latter case, similarly we can derive
% it holds that $d_{P}(v_{i+1},v_{k'})\geq |v_{i+1}v_{j_{i+1}}|+d_P(v_{k'},v_{j_{i+1}})$. We first show that $d_{G(i, j_{i+1})} (v_{i}, v_{k'})= |v_{i}v_{j_{i+1}}|+d_P(v_{k'},v_{j_{i+1}})$. It suffices to prove that $d_{P}(v_{i},v_{k'})\geq |v_{i}v_{j_{i+1}}|+d_P(v_{k'},v_{j_{i+1}})$, as follows:
% \begin{equation*}
%   \begin{split}
%          d_{P}(v_{i},v_{k'}) &= d_P(v_i,v_{i+1}) + d_{P}(v_{i+1},v_{k'})\\
%          & \geq d_P(v_i,v_{i+1})+|v_{i+1}v_{j_{i+1}}|+d_P(v_{k'},v_{j_{i+1}})\\
%          & \geq |v_{i}v_{j_{i+1}}| + d_P(v_{k'},v_{j_{i+1}}).
%     \end{split}
%     \end{equation*}
% The last inequality is due to the triangle inequality.
%Now we can drive the following:
     \begin{equation*}
  \begin{split}
         \lambda_i &\leq d_{G(i, j_{i+1})} (v_{i}, v_{k'}) \leq                |v_{i}v_{j_{i+1}}|+d_P(v_{k'},v_{j_{i+1}})\\
                &\leq d_P(v_i,v_{i+1})+|v_{i+1}v_{j_{i+1}}|+d_P(v_{k'},v_{j_{i+1}})\\
                & = d_P(v_i,v_{i+1})+ d_{G(i+1, j_{i+1})} (v_{i+1}, v_{k'})\\
                &\leq d_P(v_i,v_{i+1}) + \max_{k \in [i+1, n]} d_{G(i+1, j_{i+1})} (v_{i+1}, v_{k})\\
                & = d_P(v_i,v_{i+1}) + \lambda_{i+1}.
    \end{split}
    \end{equation*}
This proves the claim and thus the monotonicity property of $k_i$'s.

Using the monotonicity property of $k_i$'s, we can easily compute all $k_i$'s in $O(n)$ time as follows. Starting from $i=1$, the algorithm incrementally computes $k_i$ for all $i=1,2,\ldots, n-1$. The algorithm maintains an index $k$. Initially, $k=i=1$ and $k_i=k$. Consider a general step where $k_i$ has just been computed and $k=k_i$. Next we compute $k_{i+1}$ as follows. As long as  $d_P(v_1,v_{k+1})\leq d_P(v_{k+1},v_{i+1})+\lambda_{i+1}$, we increment $k$ by one. After that, we set $k_{i+1}=k$. The  monotonicity property of $k_i$'s guarantees the correctness of the algorithm. The running time is $O(n)$.

The lemma is thus proved.
\end{proof}

We obtain a candidate solution for this configuration as follows.
We first compute $\lambda_{i}$ and $j_i$ for all $i=1,2,\ldots, n-1$ in $O(n)$ time~\cite{ref:JohnsonAL19}. We then use Lemma~\ref{lem:10} to compute $r_i$ and $c_i$ for all $i=1,2,\ldots,n-1$. Let $i'=\arg\min_{i\in [1,n-1]}r_i$. Let $r'=r_{i'}$ and $j'=j_{i'}$.
%Among all indices $i\in [1,n-1]$ with $d_{P}(v_{1}, v_{i}) \geq \lambda_{i}$, let $i'$ be the one that minimizes the value $d_{P}(v_{1}, v_{i}) + \lambda_{i}$. As  $\lambda_{i}$ for all $i\in [1,n-1]$ are known, the index $i'$ can be computed in $O(n)$ time. Define $j' = j(i')$. Now consider the graph $G(v_{i'},v_{j'})$. As $d_{P}(v_{1}, v_{i'}) \geq \lambda_{i'}$, all centers of the graph are in $P(v_1,v_{i'})$. Using the value $\lambda_{i'}$, we can find a center in $O(n)$ time by scanning every vertex of $P(v_1,v_{i'})$; let $c'$ denote the index of the center and let $r'$ be the radius.
We return $(i',j')$, $c'$, and $r'$ as a candidate solution for this configuration. Notice that the candidate solution is a feasible solution, i.e., the distances from $v_{c'}$ to all vertices in $G(v_{i'},v_{j'})$ are  at most $r'$. The following lemma establishes the correctness of our candidate solution.

%To see the correctness, first notice that this is a feasible solution, i.e., all vertices of $G(v_{i'},v_{j'})$ have distances at most $r'$ to vertex $c'$. We further have the following lemma.

\begin{lemma}
$r'=r^*$.
\end{lemma}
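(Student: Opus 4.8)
The plan is to prove the two inequalities $r' \le r^*$ and $r' \ge r^*$ separately. The second is the easy direction: since the candidate solution $(i',j')$ with center $c'=v_{c_{i'}}$ and value $r'$ is feasible (every vertex of $G(i',j')$ is within distance $r'$ of $v_{c'}$), the index pair $(i',j')$ achieves radius at most $r'$, so the optimal radius satisfies $r^* \le r'$. The real content is the reverse inequality $r' \le r^*$, which says that the candidate solution captures an optimal solution \emph{whenever the optimal solution conforms to Case~1.1}.

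For $r' \le r^*$, I would start from the formula displayed just before the Definition: under the Case~1.1 hypothesis $c^* \in [1,i^*]$, either $a^*=1$ or $a^* \in [i^*,n]$, and hence
\[
r^* = \max\Bigl\{d_P(v_1,v_{c^*}),\; d_P(v_{c^*},v_{i^*}) + \max_{k\in[i^*,n]} d_{G(i^*,j^*)}(v_{i^*},v_k)\Bigr\}.
\]
The first step is to replace the inner quantity by $\lambda_{i^*}$: because $\lambda_{i^*}$ is defined as the \emph{minimum} over all $j\in[i^*,n]$ of $\max_{k\in[i^*,n]} d_{G(i^*,j)}(v_{i^*},v_k)$, we have $\lambda_{i^*} \le \max_{k\in[i^*,n]} d_{G(i^*,j^*)}(v_{i^*},v_k)$, so
\[
r^* \ge \max\Bigl\{d_P(v_1,v_{c^*}),\; d_P(v_{c^*},v_{i^*}) + \lambda_{i^*}\Bigr\}.
\]
The second step observes that $c^*\in[1,i^*]$ is a \emph{valid choice} of the index $k$ in the definition of $r_{i^*}$, which takes the minimum over $k\in[1,i^*]$ of exactly this max expression; hence the right-hand side is at least $r_{i^*}$, giving $r^* \ge r_{i^*}$. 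The third step uses $r' = \min_{i\in[1,n-1]} r_i \le r_{i^*}$ (recall $i' = \arg\min_i r_i$ and $r'=r_{i'}$), and combines to get $r' \le r_{i^*} \le r^*$.

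I would also note one subtlety about the value of $\lambda_{i^*}$ being attained at $j_{i^*}$ rather than the true optimal $j^*$: this is exactly why we set $j' = j_{i'}$, and the feasibility claim must be checked against that choice of edge, not against $e(v_{i^*},v_{j^*})$. The main obstacle is not the chain of inequalities, which is routine once the formula for $r^*$ is granted, but rather justifying that formula itself — i.e.\ the assertion that in Case~1.1 the farthest vertex $a^*$ must be either $v_1$ or lie in $[i^*,n]$, and that when it lies in $[i^*,n]$ the shortest path from $v_{c^*}$ passes through $v_{i^*}$ so its length is $d_P(v_{c^*},v_{i^*})$ plus a distance measured inside $G(i^*,j^*)$ from $v_{i^*}$. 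Since that structural fact is stated (``it is not difficult to see'') in the text preceding the Definition, I would treat it as established and lean on it; the remainder is the three-line inequality argument above together with the feasibility observation already asserted for the candidate solution.
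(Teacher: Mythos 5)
Your proposal is correct and follows essentially the same route as the paper's proof: feasibility gives $r^*\leq r'$, and the chain $r'=r_{i'}\leq r_{i^*}\leq \max\{d_P(v_1,v_{c^*}),\,d_P(v_{c^*},v_{i^*})+\lambda_{i^*}\}\leq r^*$ gives the other direction. The only cosmetic difference is that the paper additionally argues the middle expression actually \emph{equals} $r^*$ (via the graph $G(i^*,j_{i^*})$), whereas you correctly observe that the one-sided inequality already suffices.
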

\begin{proof}
First of all, as the candidate solution is a feasible one, by the definition of $r^*$, $r^*\leq r'$ holds. It remains to prove $r'\leq r^*$.

Recall that $r^* = \max \{d_{P}(v_{c^*}, v_{1}), d_{P}(v_{c^*}, v_{i^*}) + \max_{k \in [i^*, n]} d_{G({i^*}, {j^*})} (v_{i^*}, v_{k})\}$. By the definition of $\lambda_i$, $\lambda_{i^*} \leq \max_{k \in [i^*, n]} d_{G({i^*}, {j^*})} (v_{i^*}, v_{k})$. Thus, $r^* \geq \max \{d_{P}(v_{c^*}, v_{1}), d_{P}(v_{c^*}, v_{i^*}) + \lambda_{i^*}\}$. We claim that $r^* = \max \{d_{P}(v_{c^*}, v_{1}), d_{P}(v_{c^*}, v_{i^*}) + \lambda_{i^*}\}$. Indeed, we know that the value $\max \{d_{P}(v_{c^*}, v_{1}), d_{P}(v_{c^*}, v_{i^*}) + \lambda_{i^*}\}$ is equal to the distance from vertex $c^*$ to its farthest vertex in the graph $G({i^*},{j_{i^*}})$. By the definition of $r^*$, $r^*\leq \max \{d_{P}(v_{c^*}, v_{1}), d_{P}(v_{c^*}, v_{i^*}) + \lambda_{i^*}\}$. The claim thus follows.

The claim and the definition of $r_{i^*}$ together lead to $r_{i^*}\leq r^*$. Further, by the definition of the index $i'$, we have $r'=r_{i'}\leq r_{i^*}\leq r^*$. The lemma thus follows.
\end{proof}

% We locate a point $c$ on $P$ such that $d_{P} (v_{1}, c) = (d_{P}(v_{1}, v_{i^*}) + \lambda_{i^*}) / 2$, noted $c$ may not be a vertex of $P$ and $c \in P(v_{1}, v_{i^*})$ due to $d_{P}(v_{1}, v_{i^*}) \geq \lambda_{i^*}$. Then we have the following two scenarios,
% \begin{enumerate}
%     \item $c \in \{v_{1}, v_{2}, ..., v_{n}\}$, $c$ is a vertex of $P$. It is not difficult to see that $c^* = c$, $r^* = d_{P} (v_{1}, c^*)$.
%     \item $c \in e(v_{k}, v_{k+1}) \setminus \{v_{k}, v_{k+1}\}$, $k \in [1, i^* - 1]$. The vertex $v_{k}$ and $v_{k+1}$ are candidate centers. We compute the candidate radius $r(c)$ for $c \in \{v_{k}, v_{k+1}\}$ by Eqn. \ref{formula:CandidateCenter}.
%     \begin{gather}
%         \label{formula:CandidateCenter}
%          r(c) = \max \{d_{P} (v_{1}, c), d_{P} (c, v_{i^*}) + \nonumber \\
%          \max_{k \in [i^*, n]} d_{G(i^*, j^*)} (v_{i^*}, v_{k})\}
%     \end{gather}
%     Then we have $c^* = \arg \min_{c \in \{v_{k}, v_{k+1}\}} r(c)$ and $r^* = r(c^*)$.
% \end{enumerate}

\subsubsection*{Case 1.2: $c^* \in [j^*, n]$.}
\label{subsubsec:case1-2}

This case is symmetric to Case 1.1 and we use a similar algorithm to compute a candidate solution. The details are omitted.

\subsubsection*{Case 2: $c^* \in (i^*, j^*)$.}
\label{subsubsec:case2}

We now consider the case $c^* \in (i^*, j^*)$.
In this case, it is easy to see that $d_{G(i^*,j^*)}(v_{c^*},v_k)<d_{G(i^*,j^*)}(v_{c^*},v_1)$ for any $k\in (1,i^*]$ and similarly $d_{G(i^*,j^*)}(v_{c^*},v_k)<d_{G(i^*,j^*)}(v_{c^*},v_n)$ for any $k\in [j^*,n)$. Hence, $a^*$ cannot be in $(1,i^*]\cup [j^*,n)$. Thus, $a^* = 1$, $a^* = n$, or $a^* \in (i^*, j^*)$.

\subsubsection*{Case 2.1: $a^* = 1$.}
\label{subsubsec:case2-1}

Depending on whether the new added edge $e(v_{i^*}, v_{j^*})$ is contained in the path $\pi^*$, there are two cases.

\subsubsection*{Case 2.1.1: $e(v_{i^*}, v_{j^*}) \subseteq \pi^*$.}
\label{subsubsec:case2-1-3-1}

\begin{figure}[t]
    \centering
    \includegraphics[scale=0.55]{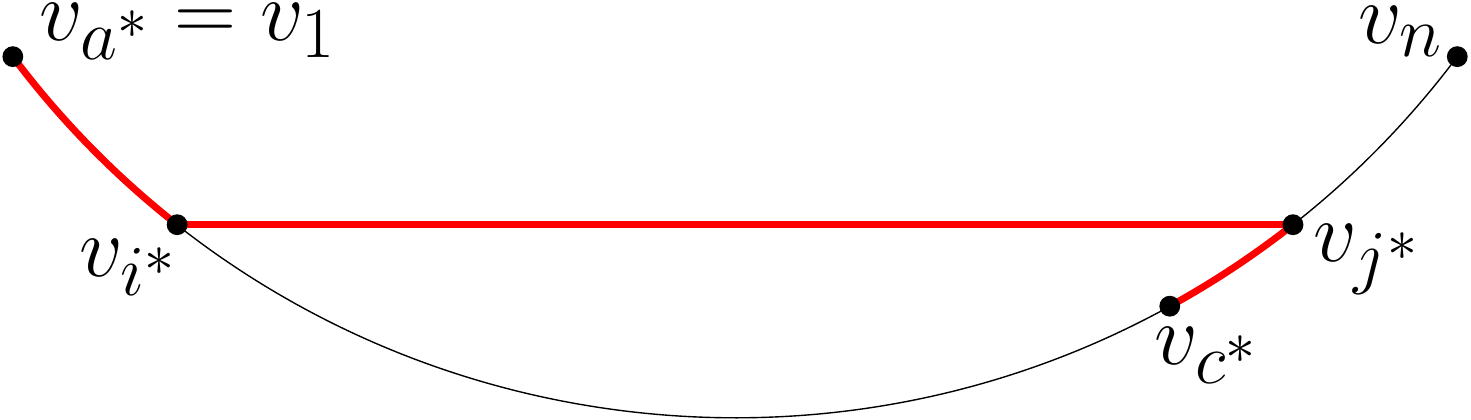}
    \caption{Illustrating the configuration for Case 2.1.1, where $c^* \in ({i^*}, {j^*})$, $a^* = 1$, and $e(v_{i^*}, v_{j^*}) \subseteq \pi^*$. The thick (red) path is $\pi^*$.}
    \label{fig:case2-1-3-1}
\end{figure}

In this case, $c^* \in ({i^*}, {j^*})$, $a^* = 1$, and $e(v_{i^*}, v_{j^*}) \subseteq \pi^*$. This implies that $\pi^*=P(v_{c^*},v_{j^*})\cup e(v_{i^*},v_{j^*})\cup P(v_1,v_{i^*})$; e.g., see Fig.~\ref{fig:case2-1-3-1}.
% We claim that $b^* < c^*$. Indeed, if $b^*\geq c^*$, then $b^*\in [c^*,j^*-1]$ and thus vertex $b^*$ is in $\pi^*$, which contradicts with Observation \ref{observation:cprime-farthest}.

%The following lemma actually reduces this case to Case 1.2.
\begin{lemma}\label{lem:40}
The index pair $(i^*,c^*)$ is an optimal solution and $c^*$ is a center of the graph $G({i^*},{c^*})$.
\end{lemma}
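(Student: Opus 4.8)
The plan is to show that the vertex $v_{c^*}$ is itself a feasible center of $G(i^*,c^*)$ with radius at most $r^*$. Since $r^*$ is the globally minimum radius over all augmenting edges, any $G(i,j)$ has radius at least $r^*$; combining this with feasibility forces the radius of $G(i^*,c^*)$ to equal $r^*$, so that $(i^*,c^*)$ (a valid pair, as $i^*<c^*$) is optimal and $v_{c^*}$ is one of its centers. First I would record that, because $\pi^*$ is a shortest path from $v_{c^*}$ to its farthest vertex $v_1$ and $\pi^*=P(v_{c^*},v_{j^*})\cup e(v_{i^*},v_{j^*})\cup P(v_1,v_{i^*})$, we have $r^*=d_P(v_{c^*},v_{j^*})+|v_{i^*}v_{j^*}|+d_P(v_1,v_{i^*})$.

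The next step is to extract a key inequality from the hypothesis $e(v_{i^*},v_{j^*})\subseteq\pi^*$. As $\pi^*$ is a shortest $v_{c^*}$-to-$v_1$ path in $G(i^*,j^*)$, its length cannot exceed that of the direct subpath $P(v_1,v_{c^*})=P(v_1,v_{i^*})\cup P(v_{i^*},v_{c^*})$; cancelling the common piece $d_P(v_1,v_{i^*})$ yields the key inequality
\[
d_P(v_{c^*},v_{j^*})+|v_{i^*}v_{j^*}|\ \le\ d_P(v_{i^*},v_{c^*}).
\]
I would also note the metric triangle inequality $|v_{i^*}v_{c^*}|\le |v_{i^*}v_{j^*}|+|v_{j^*}v_{c^*}|\le |v_{i^*}v_{j^*}|+d_P(v_{j^*},v_{c^*})$, which compares the new edge $e(v_{i^*},v_{c^*})$ against the detour through $v_{j^*}$ that $G(i^*,j^*)$ uses.

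Then I would bound $d_{G(i^*,c^*)}(v_{c^*},v_k)$ for every $k$ by splitting on $k\le c^*$ and $k\ge c^*$. For $k\le c^*$, every shortest path from $v_{c^*}$ to $v_k$ in either graph is either the direct subpath of $P$ (identical in both graphs) or a route crossing the added edge; the crossing route in $G(i^*,c^*)$ differs from its counterpart in $G(i^*,j^*)$ only in that $|v_{i^*}v_{c^*}|$ replaces $d_P(v_{j^*},v_{c^*})+|v_{i^*}v_{j^*}|$, which by the triangle inequality above is no larger. Hence $d_{G(i^*,c^*)}(v_{c^*},v_k)\le d_{G(i^*,j^*)}(v_{c^*},v_k)\le r^*$ for all $k\le c^*$ (in particular for $v_1$). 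For $k\ge c^*$ the added edge only leads leftward, so it cannot shorten a path to a vertex on the right; thus $d_{G(i^*,c^*)}(v_{c^*},v_k)=d_P(v_{c^*},v_k)\le d_P(v_{c^*},v_n)$. Here the key inequality does the work: it makes the direct subpath to $v_n$ no longer than the edge-detour in $G(i^*,j^*)$, so $d_P(v_{c^*},v_n)=d_{G(i^*,j^*)}(v_{c^*},v_n)\le r^*$, the last step because $v_1$ is a farthest vertex of $v_{c^*}$ in $G(i^*,j^*)$.

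Combining the two ranges gives $d_{G(i^*,c^*)}(v_{c^*},v_k)\le r^*$ for all $k$, which is the feasibility needed to conclude the lemma. The main obstacle I anticipate is exactly the right-hand range $k\ge c^*$: there the replacement edge $e(v_{i^*},v_{c^*})$ offers no help, and a priori we only know that the \emph{minimum} of the direct and detour routes in $G(i^*,j^*)$ is at most $r^*$, not the direct route on its own. The key inequality is precisely what certifies that the direct route is the shorter of the two, and deriving it cleanly from the assumption $e(v_{i^*},v_{j^*})\subseteq\pi^*$ is the crux of the argument.
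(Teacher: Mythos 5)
Your proof is correct and takes essentially the same approach as the paper: show that every distance from $v_{c^*}$ in $G(i^*,c^*)$ is at most $r^*$, using the triangle inequality $|v_{i^*}v_{c^*}|\le |v_{i^*}v_{j^*}|+d_P(v_{j^*},v_{c^*})$ for vertices left of $c^*$ and the fact that $e(v_{i^*},v_{j^*})\subseteq\pi^*$ forces the direct subpath to $v_n$ to be shortest for vertices right of $c^*$. The only cosmetic difference is that you merge the paper's two left-hand subcases ($k\in[1,i^*]$ and $k\in(i^*,c^*)$, the latter handled in the paper by comparing cycle lengths) into a single route-by-route comparison, which works equally well.
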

\begin{proof}
We show that the distances from $c^*$ to all vertices in the graph $G({i^*},{c^*})$ are at most $r^*$ (e.g., see Fig.~\ref{fig:case2-1-3-1-proof}). This implies that the radius of $G({i^*},{c^*})$ is at most $r^*$ and thus proves the lemma.

\begin{figure}[h]
    \centering
    \includegraphics[scale=0.55]{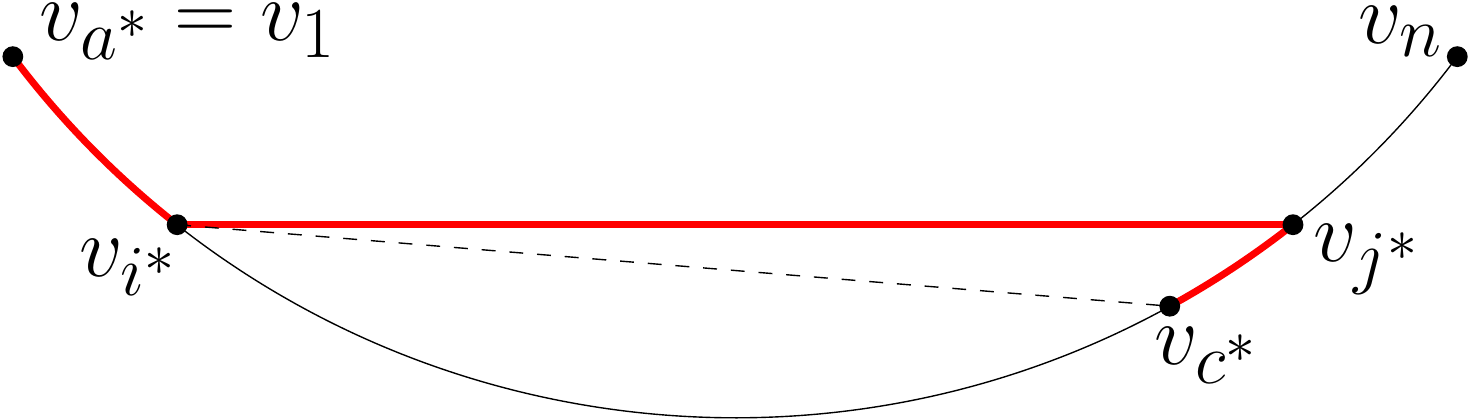}
    \caption{Illustrating the proof of Lemma~\ref{lem:40}: The distances from $c^*$ to all vertices in $G({i^*}, c^*)$ are at most $r^*$.}
    \label{fig:case2-1-3-1-proof}
\end{figure}

Let $k$ be any index in $[1,n]$. Our goal is to prove $d_{G({i^*},{c^*})}(v_{c^*},v_k)\leq r^*$.

If $k\in [1,i^*]$, then $d_{G({i^*},{c^*})}(v_{c^*},v_k)\leq |e(v_{i^*},v_{c^*})|+d_P(v_k,v_{i^*})$. By the triangle inequality, $|e(v_{i^*},v_{c^*})|\leq |e(v_{i^*},v_{j^*})|+d_P(v_{c^*},v_{j^*})$ holds. Hence, $r^*=|\pi^*|=|e(v_{i^*},v_{j^*})|+d_P(v_{c^*},v_{j^*})+d_P(v_1,v_{i^*})\geq |e(v_{i^*},v_{c^*})|+d_P(v_1,v_{i^*})\geq |e(v_{i^*},v_{c^*})|+d_P(v_k,v_{i^*})\geq d_{G({i^*},{c^*})}(v_{c^*},v_k)$.

If $k\in [c^*,n]$, then $d_{G({i^*},{c^*})}(v_{c^*},v_k)\leq d_P(v_{c^*},v_k)\leq d_P(v_{c^*},v_n)$. As $\pi^*$ is a shortest path in $G({i^*},{j^*})$ and $\pi^*$ contains $P(v_{c^*},v_{j^*})$, $P(v_{c^*},v_{n})$ must be a shortest path from $v_{c^*}$ to $v_n$ in $G({i^*},{j^*})$, implying that $d_P(v_{c^*},v_n)\leq r^*$. Therefore, $d_{G({i^*},{c^*})}(v_{c^*},v_k)\leq r^*$ holds.

If $k\in (i^*,c^*)$, then both $v_{c^*}$ and $v_k$ are in the cycle $C({i^*},{j^*})$ of the graph $G({i^*},{j^*})$ and are also in the cycle $C({i^*},{c^*})$ of the graph $G({i^*},{c^*})$. Hence, $d_{G({i^*},{j^*})}(v_{c^*},v_k)=d_{C({i^*},{j^*})}(v_{c^*},v_k)$ and  $d_{G({i^*},{c^*})}(v_{c^*},v_k)=d_{C({i^*},{c^*})}(v_{c^*},v_k)$. Due to the triangle inequality, $|C({i^*},{c^*})|\leq |C({i^*},{j^*})|$. Hence,  $d_{C({i^*},{c^*})}(v_{c^*},v_k)\leq d_{C({i^*},{j^*})}(v_{c^*},v_k)$. As  $d_{G({i^*},{j^*})}(v_{c^*},v_k)\leq r^*$, we can now obtain $d_{G({i^*},{c^*})}(v_{c^*},v_k)=d_{C({i^*},{c^*})}(v_{c^*},v_k)\leq d_{C({i^*},{j^*})}(v_{c^*},v_k) = d_{G({i^*},{j^*})}(v_{c^*},v_k)\leq r^*$.
\end{proof}

Because $(i^*,c^*)$ is an optimal solution with $c^*$ as a center in the graph $G(i^*,c^*)$, it is a configuration of Case~1.2. Hence, the candidate solution found by our algorithm for Case~1.2 is also an optimal solution. Therefore, it is not necessary to compute a candidate solution for this case any more. In other words, this case can be reduced to Case~1.2.

%We make an assumption that $(i^*, j)$ is not an optimal solution with the same configuration as $(i^*, j^*)$ for any $j < j^*$, since otherwise we could consider $(i^*, j)$ as $(i^*,j^*)$.
%We also assume that none of the previously discussed configurations has an optimal solution, since otherwise our previously obtained candidate solutions already have an optimal one.

\subsubsection*{Case 2.1.2: $e(v_{i^*}, v_{j^*}) \not\subseteq \pi^*$.}
\label{subsubsec:case2-1-3-2}

\begin{figure}[htbp]
    \centering
    \includegraphics[scale=0.55]{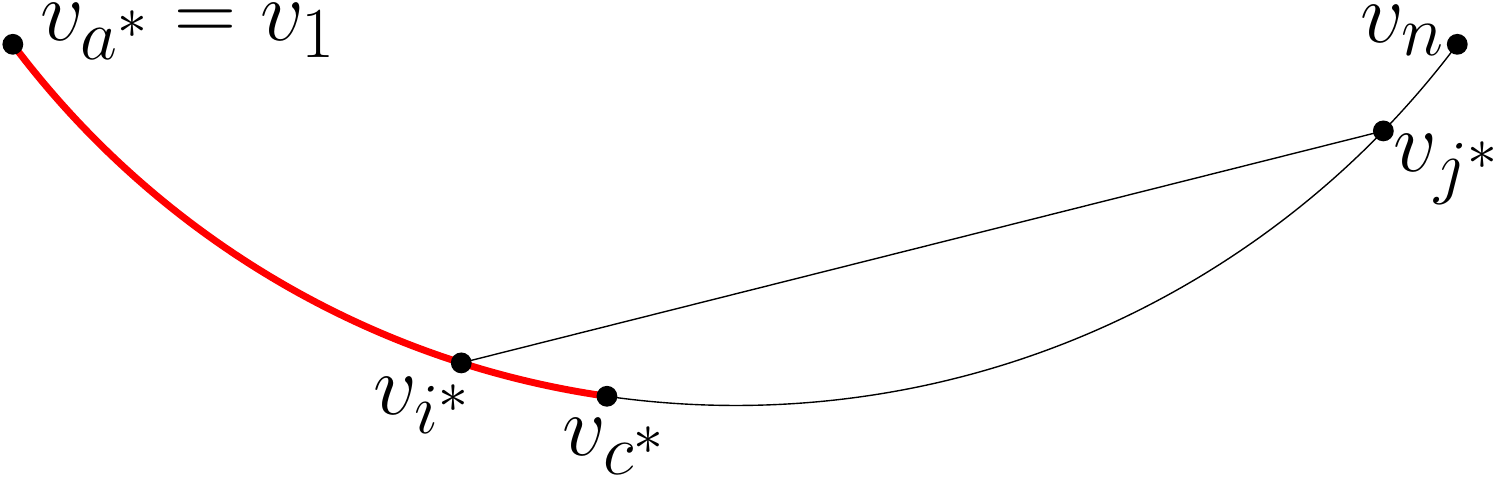}
    \caption{Illustrating the configuration for Case 2.1.2, where $c^* \in ({i^*}, {j^*})$, $a^* = 1$, and $e(v_{i^*}, v_{j^*}) \not\subseteq \pi^*$. The thick (red) path is $\pi^*$.}
    \label{fig:case2-1-3-2}
\end{figure}

Refer to Fig. \ref{fig:case2-1-3-2}. In this case, $c^* \in ({i^*}, {j^*})$, $a^* = 1$, and $e(v_{i^*}, v_{j^*}) \not\subseteq \pi^*$.
This implies that $\pi^*=P(v_1,v_{c^*})$.
%As $b^*\not\in \pi^*$ (by Observation~\ref{observation:cprime-farthest}), $c^*<b^*$.
The following lemma reduces this case to Case 1.1.

\begin{lemma}\label{lem:50}
The index pair $(c^*,j^*)$ is an optimal solution and $c^*$ is a center of the graph $G(c^*,j^*)$.
\end{lemma}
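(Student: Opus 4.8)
The plan is to follow the template of the proof of Lemma~\ref{lem:40}: I will show that the distance from $v_{c^*}$ to every vertex of the new graph $G(c^*,j^*)$ is at most $r^*$. This shows the radius of $G(c^*,j^*)$ is at most $r^*$; since $r^*$ is the minimum radius over all augmenting paths and $G(c^*,j^*)$ is itself an augmenting path, its radius must be exactly $r^*$. Hence $(c^*,j^*)$ is an optimal solution and $v_{c^*}$ is a center of $G(c^*,j^*)$, as claimed. The first thing I would record is the value of $r^*$: because we are in the configuration $a^*=1$, $c^*\in(i^*,j^*)$, and $e(v_{i^*},v_{j^*})\not\subseteq\pi^*$, the given identity $\pi^*=P(v_1,v_{c^*})$ yields $r^*=|\pi^*|=d_P(v_{c^*},v_1)$.

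The single quantitative tool I would use is the triangle inequality $|v_{c^*}v_{j^*}|\leq d_P(v_{c^*},v_{i^*})+|v_{i^*}v_{j^*}|$, obtained by routing from $v_{c^*}$ to $v_{i^*}$ along $P$ and then across the old edge. Intuitively this says that, as a way of reaching $v_{j^*}$ from $v_{c^*}$, the new edge $e(v_{c^*},v_{j^*})$ is at least as cheap as the route that the old edge $e(v_{i^*},v_{j^*})$ forces in $G(i^*,j^*)$. I would then split on the location of the target vertex $v_k$, $k\in[1,n]$. For $k\in[1,c^*]$, the new edge only leads rightward to $v_{j^*}$, so it cannot shorten any path to a vertex on the left of $v_{c^*}$; hence $d_{G(c^*,j^*)}(v_{c^*},v_k)=d_P(v_{c^*},v_k)\leq d_P(v_{c^*},v_1)=r^*$.

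For $k\in[c^*,n]$ I would argue the stronger comparison $d_{G(c^*,j^*)}(v_{c^*},v_k)\leq d_{G(i^*,j^*)}(v_{c^*},v_k)$, from which $d_{G(c^*,j^*)}(v_{c^*},v_k)\leq r^*$ follows because $v_1$ is a farthest vertex of $v_{c^*}$ in $G(i^*,j^*)$ and so $d_{G(i^*,j^*)}(v_{c^*},v_k)\leq r^*$. In both graphs a shortest path from $v_{c^*}$ to such a $v_k$ either runs rightward along $P$, of length $d_P(v_{c^*},v_k)$ (identical in the two graphs), or reaches $v_{j^*}$ across the added edge and then continues along $P$ to $v_k$. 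The latter ``edge route'' costs $|v_{c^*}v_{j^*}|+d_P(v_{j^*},v_k)$ in $G(c^*,j^*)$ versus $d_P(v_{c^*},v_{i^*})+|v_{i^*}v_{j^*}|+d_P(v_{j^*},v_k)$ in $G(i^*,j^*)$, so the triangle inequality above makes the former no larger than the latter. Since the along-$P$ alternative agrees in the two graphs, taking the minimum over the two routes gives the desired inequality. For the subcase $k\in[c^*,j^*]$ this is cleanest phrased via the cycles: both $v_{c^*}$ and $v_k$ lie on $C(c^*,j^*)$ and on $C(i^*,j^*)$, so $d_{G(c^*,j^*)}(v_{c^*},v_k)=d_{C(c^*,j^*)}(v_{c^*},v_k)$ and $d_{G(i^*,j^*)}(v_{c^*},v_k)=d_{C(i^*,j^*)}(v_{c^*},v_k)$.

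The step I expect to need the most care is exactly this reduction to two candidate routes for $k\in[c^*,n]$: I must rule out any ``exotic'' shortest path, for instance one that crosses the added edge in the backward direction or traverses it more than once. This is intuitively clear because the added edge joins two vertices ($v_{c^*}$ and $v_{j^*}$) lying on the same side, so any useful path either stays on $P$ or crosses the edge exactly once toward $v_{j^*}$; I would make it rigorous with the same cycle-distance observation used in Lemma~\ref{lem:40} for the range $k\in[c^*,j^*]$, and with the remark that for $k\in[j^*,n]$ every path from $v_{c^*}$ to $v_k$ must pass through $v_{j^*}$, so the two listed routes are exhaustive. Everything else is a routine application of the triangle inequality and the definition of $r^*$.
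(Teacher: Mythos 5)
Your proposal is correct and follows essentially the same route as the paper's proof: the same case split on the location of $v_k$ (left of $v_{c^*}$, inside the cycle, right of $v_{j^*}$), the same key triangle inequality $|v_{c^*}v_{j^*}|\leq d_P(v_{c^*},v_{i^*})+|v_{i^*}v_{j^*}|$, and the same comparison of distances in $G(c^*,j^*)$ against those in $G(i^*,j^*)$, which are bounded by $r^*$ since $v_{c^*}$ is a center there. The only cosmetic difference is that you unify the range $k\in[c^*,n]$ under a single two-route comparison, whereas the paper treats $(c^*,j^*)$ via the cycle-length argument and $[j^*,n]$ separately.
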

\begin{proof}
Some proof techniques are similar to those for Lemma~\ref{lem:40}.
It suffices to show that the distances from $c^*$ to all vertices in $G({c^*},{j^*})$ are at most $r^*$ (e.g., see Fig.~\ref{fig:case2-1-3-2-proof}).
Let $k$ be any index in $[1,n]$. Our goal is to prove $d_{G({c^*},{j^*})}(v_{c^*},v_k)\leq r^*$.

\begin{figure}[h]
    \centering
    \includegraphics[scale=0.55]{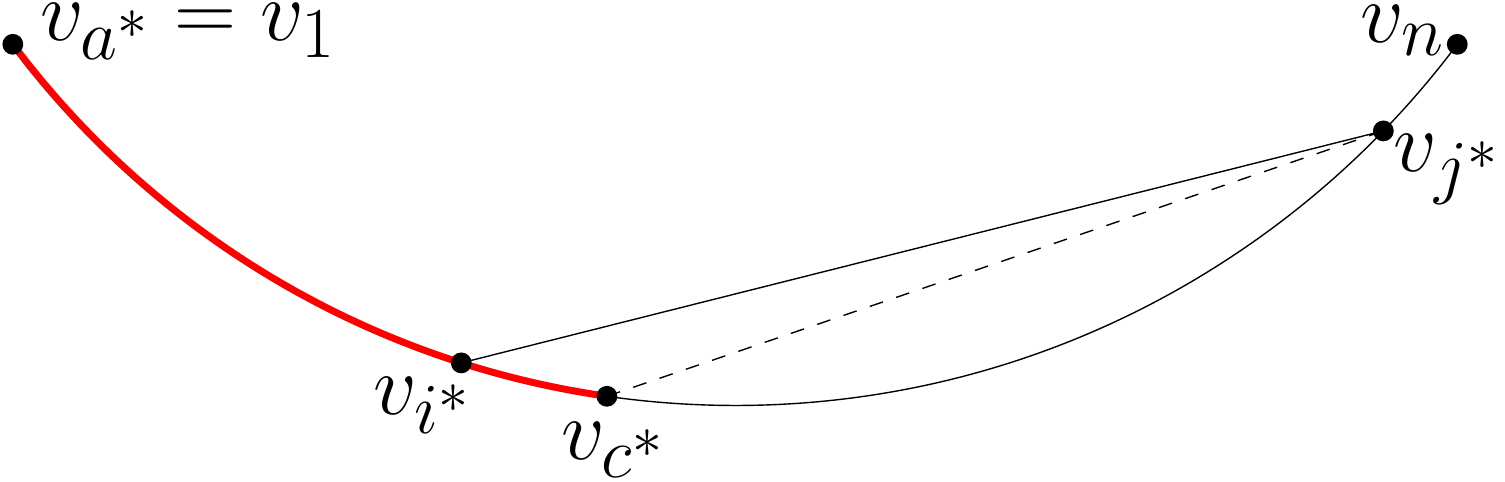}
    \caption{Illustrating the proof of Lemma~\ref{lem:50}: The distances from $c^*$ to all vertices in $G(c^*, j^*)$ are at most $r^*$.}
    \label{fig:case2-1-3-2-proof}
\end{figure}

Note that $d_P(v_{c^*},v_1)=r^*$, for $\pi^*=P(v_1,v_{c^*})$.

If $k\in [1,c^*]$, then $d_{G({c^*},{j^*})}(v_{c^*},v_k)\leq d_P(v_{c^*},v_k)\leq d_P(v_{c^*},v_1)=r^*$.

If $k\in [j^*,n]$, then $d_{G({c^*},{j^*})}(v_{c^*},v_k)\leq d_{G({c^*},{j^*})}(v_{c^*},v_n)$. Below we prove $d_{G({c^*},{j^*})}(v_{c^*},v_n)\leq d_{G({i^*},{j^*})}(v_{c^*},v_n)$, which is at most $r^*$. Note that $d_{G({i^*},{j^*})}(v_{c^*},v_n)=\min\{d_P(v_{c^*},v_n), d_P(v_{c^*},v_{i^*})+|e(v_{i^*},v_{j^*})|+d_P(v_{j^*},v_n)\}$. If $d_{G({i^*},{j^*})}(v_{c^*},v_n)=d_P(v_{c^*},v_n)$, then we have $d_{G({c^*},{j^*})}(v_{c^*},v_n)\leq d_P(v_{c^*},v_n)=d_{G({i^*},{j^*})}(v_{c^*},v_n)$. If $d_{G({i^*},{j^*})}(v_{c^*},v_n)=d_P(v_{c^*},v_{i^*})+|e(v_{i^*},v_{j^*})|+d_P(v_{j^*},v_n)$, then by the triangle inequality, $d_{G({c^*},{j^*})}(v_{c^*},v_n)\leq |e(v_{c^*},v_{j^*})|+d_P(v_{j^*},v_n)\leq d_P(v_{c^*},v_{i^*})+|e(v_{i^*},v_{j^*})|+d_P(v_{j^*},v_n)=d_{G({i^*},{j^*})}(v_{c^*},v_n)$.

If $k\in (c^*,j^*)$, then both $v_{c^*}$ and $v_k$ are in the cycle $C({i^*},{j^*})$ of the graph $G({i^*},{j^*})$ and are also in the cycle $C({c^*},{j^*})$ of the graph $G({c^*},{j^*})$. By a similar analysis as that for Lemma~\ref{lem:40}, we can obtain $d_{G({c^*},{j^*})}(v_{c^*},v_k)\leq d_{G({i^*},{j^*})}(v_{c^*},v_k)\leq r^*$.
\end{proof}

%Because $(c^*,j^*)$ is also an optimal solution and it is a configuration of Case 1.1, the candidate solution found by our algorithm for Case 1.1 is also an optimal solution and thus we do not need find a candidate solution for this case any more.

\subsubsection*{Case 2.2: $a^* = n$.}
\label{subsubsec:case2-2}

This case is symmetric to Case 2.1 and we omit the details.

\subsubsection*{Case 2.3: $a^* \in (i^*, j^*)$.}
\label{subsubsec:case2-3}

In this case, both $a^*$ and $c^*$ are in $(i^*,j^*)$. Without loss of generality, we assume that $c^*<a^*$. The following lemma reduces this case to Case 1.1.

\begin{lemma}\label{lem:60}
The index pair $(c^*,j^*)$ is an optimal solution and $c^*$ is a center of the graph $G(c^*,j^*)$.
\end{lemma}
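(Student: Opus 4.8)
The plan is to follow the template of Lemmas~\ref{lem:40} and~\ref{lem:50}: I will show that the distances from $v_{c^*}$ to all vertices of $G(c^*,j^*)$ are at most $r^*$. Since $(c^*,j^*)$ is then a feasible solution whose radius is at most $r^*$, and $r^*$ is the optimal radius, its radius equals $r^*$, so $(c^*,j^*)$ is optimal and $v_{c^*}$ is one of its centers. Moreover, because $v_{c^*}$ is the left endpoint of the inserted edge, $(c^*,j^*)$ falls exactly into the configuration of Case~1.1, which is what the reduction requires. Throughout I use that $i^*<c^*<a^*<j^*$ and that $v_{a^*}$ is a farthest vertex of $v_{c^*}$ in $G(i^*,j^*)$, so $d_{G(i^*,j^*)}(v_{c^*},v_k)\leq r^*$ for every $k$. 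Fixing an arbitrary $k\in[1,n]$, I split into the three ranges $k\in(c^*,j^*)$, $k\in[j^*,n]$, and $k\in[1,c^*]$.

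For $k\in(c^*,j^*)$ and $k\in[j^*,n]$ I expect a routine adaptation of Lemmas~\ref{lem:40} and~\ref{lem:50}. When $k\in(c^*,j^*)$, both $v_{c^*}$ and $v_k$ lie on both cycles, and in each cycle their distance equals $\min\{d_P(v_{c^*},v_k),\,|C|-d_P(v_{c^*},v_k)\}$ for the corresponding cycle length; the triangle inequality gives $|v_{c^*}v_{j^*}|\leq d_P(v_{c^*},v_{i^*})+|v_{i^*}v_{j^*}|$ and hence $|C(c^*,j^*)|\leq|C(i^*,j^*)|$, so $d_{C(c^*,j^*)}(v_{c^*},v_k)\leq d_{C(i^*,j^*)}(v_{c^*},v_k)=d_{G(i^*,j^*)}(v_{c^*},v_k)\leq r^*$, exactly as in Lemma~\ref{lem:40}. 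When $k\in[j^*,n]$, every such vertex is reached through $v_{j^*}$, so it suffices to compare $d_{G(c^*,j^*)}(v_{c^*},v_{j^*})=|v_{c^*}v_{j^*}|$ with $d_{G(i^*,j^*)}(v_{c^*},v_{j^*})=\min\{d_P(v_{c^*},v_{j^*}),\,d_P(v_{c^*},v_{i^*})+|v_{i^*}v_{j^*}|\}$; the same triangle inequality shows the former is no larger, and adding the common $d_P(v_{j^*},v_k)$ yields $d_{G(c^*,j^*)}(v_{c^*},v_k)\leq d_{G(i^*,j^*)}(v_{c^*},v_k)\leq r^*$, as in Lemma~\ref{lem:50}.

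The hard part is the range $k\in[1,c^*]$. Here $v_k$ lies to the left of the new cycle $C(c^*,j^*)$, so $d_{G(c^*,j^*)}(v_{c^*},v_k)\leq d_P(v_{c^*},v_k)\leq d_P(v_{c^*},v_1)$, and everything reduces to proving the single inequality $d_P(v_{c^*},v_1)\leq r^*$. The subtlety — and the reason this case is genuinely different from Lemmas~\ref{lem:40} and~\ref{lem:50} — is that in $G(i^*,j^*)$ the shortest path from $v_{c^*}$ to $v_1$ could in principle use the new edge $e(v_{i^*},v_{j^*})$, in which case $d_{G(i^*,j^*)}(v_{c^*},v_1)$ is strictly smaller than $d_P(v_{c^*},v_1)$ and the bound $d_{G(i^*,j^*)}(v_{c^*},v_1)\leq r^*$ no longer transfers to the path distance. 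I therefore plan to establish the key claim that this shortest path does \emph{not} use the edge, i.e.\ $d_{G(i^*,j^*)}(v_{c^*},v_1)=d_P(v_{c^*},v_1)$, which immediately gives $d_P(v_{c^*},v_1)\leq r^*$.

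I would prove the key claim by contradiction, and this is where the placement of $v_{a^*}$ in $(c^*,j^*)$ is used essentially. Suppose the $G(i^*,j^*)$-shortest path from $v_{c^*}$ to $v_1$ uses the edge; since any such path reaches $v_1$ through $v_{i^*}$, this forces $d_P(v_{c^*},v_{i^*})\geq|v_{i^*}v_{j^*}|+d_P(v_{c^*},v_{j^*})$. Because $c^*<a^*<j^*$, the ``around'' route from $v_{c^*}$ to $v_{a^*}$ has length $d_P(v_{c^*},v_{i^*})+|v_{i^*}v_{j^*}|+d_P(v_{a^*},v_{j^*})$, which under the supposition strictly exceeds $d_P(v_{c^*},v_{j^*})\geq d_P(v_{c^*},v_{a^*})$; hence the direct route is shorter and $r^*=d_{G(i^*,j^*)}(v_{c^*},v_{a^*})=d_P(v_{c^*},v_{a^*})\leq d_P(v_{c^*},v_{j^*})$. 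On the other hand, the distance to $v_1$ along the edge is $d_{G(i^*,j^*)}(v_{c^*},v_1)=d_P(v_1,v_{i^*})+|v_{i^*}v_{j^*}|+d_P(v_{c^*},v_{j^*})>d_P(v_{c^*},v_{j^*})\geq r^*$, using $|v_{i^*}v_{j^*}|>0$ since $i^*\neq j^*$. This makes $v_1$ strictly farther from $v_{c^*}$ than $v_{a^*}$, contradicting that $v_{a^*}$ is a farthest vertex. Thus the edge is unused, the key claim holds, and the lemma follows. I expect this contradiction — that a shortcut to $v_1$ would dethrone $v_{a^*}$ as the farthest vertex — to be the only real obstacle; the other two ranges are mechanical.
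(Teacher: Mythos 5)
Your proof is correct and takes essentially the same route as the paper's: the same three-range split, with the cycle-length comparison for $k\in(c^*,j^*)$, the triangle inequality through $v_{j^*}$ for $k\in[j^*,n]$, and a contradiction against $v_{a^*}$ being a farthest vertex (using $c^*<a^*<j^*$) to handle $k\in[1,c^*]$. The only cosmetic difference is the intermediate claim: the paper shows the shortest path from $v_{c^*}$ to $v_{j^*}$ must use the new edge and deduces $d_{G(i^*,j^*)}(v_{c^*},v_k)=d_P(v_k,v_{c^*})$ for $k\leq c^*$, whereas you argue directly that the shortest path from $v_{c^*}$ to $v_1$ cannot use the new edge; both contradictions rest on the same observation that otherwise some vertex would lie farther than $r^*$ from the center.
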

\begin{proof}
It suffices to show that the distances from $c^*$ to all vertices in $G({c^*},{j^*})$ are at most $r^*$ (e.g., see Fig.~\ref{fig:case2-3-1-2}).
Let $k$ be any index in $[1,n]$. Our goal is to prove $d_{G({c^*},{j^*})}(v_{c^*},v_k)\leq r^*$.

\begin{figure}[htbp]
    \centering
    \includegraphics[scale=0.55]{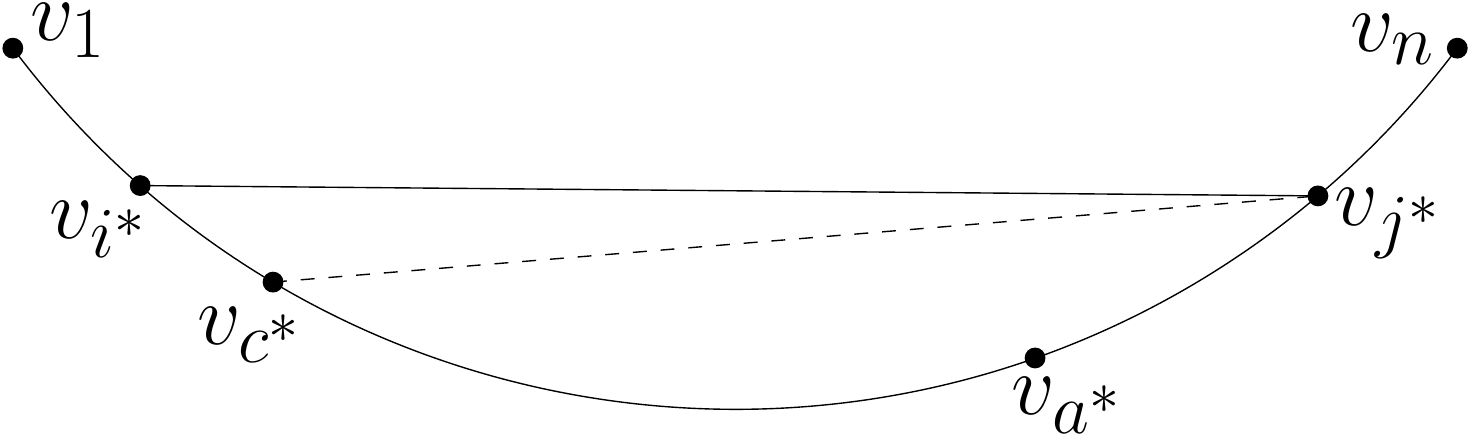}
    \caption{Illustrating the proof of Lemma~\ref{lem:60}: The distances from $c^*$ to all vertices in $G(c^*, j^*)$ are at most $r^*$.}
    \label{fig:case2-3-1-2}
\end{figure}

We first prove a claim that $P(v_{c^*},v_{j^*})$ is not a
shortest path from $v_{c^*}$ to $v_{j^*}$ in
$G({i^*},{j^*})$. Assume to the contrary this is not
true. Then, as $v_{a^*}\in P(v_{c^*},v_{j^*})$,
$P(v_{c^*},v_{a^*})$ must be a shortest path from $v_{c^*}$ to
$v_{a^*}$ and thus its length is equal to $r^*$. Because $a^*\in (i^*,j^*)$, the length of $P(v_{c^*},v_{a^*})$ is
strictly shorter than that of $P(v_{c^*},v_{j^*})$. Thus we
obtain that $d_{G({i^*},{j^*})}(v_{c^*},v_{j^*})=d_P(v_{c^*},v_{j^*})>r^*$. This incurs contradiction as $c^*$ is a
center of $G({i^*},{j^*})$ and the radius of
$G({i^*},{j^*})$ is $r^*$.

The claim implies that $P(v_{i^*},v_{c^*})\cup e(v_{i^*},v_{j^*})$ is a shortest path from $v_{c^*}$ to $v_{j^*}$ in $G({i^*},{j^*})$, and this further implies that $P(v_{i^*},v_{c^*})$ is a shortest path from $v_{c^*}$ to $v_{i^*}$ in $G({i^*},{j^*})$. Hence, for any $i\in [1,c^*]$, $P(v_{i},v_{c^*})$ is a shortest path from $v_{c^*}$ to $v_{i}$ in $G({i^*},{j^*})$ and $d_{G({i^*},{j^*})}(v_{c^*},v_i)=d_P(v_i,v_{c^*})$. Also, for any $i\in [j^*,n]$, $P(v_{i^*},v_{c^*})\cup e(v_{i^*},v_{j^*})\cup P(v_{j^*},v_i)$ is a shortest path from $v_{c^*}$ to $v_i$ in $G({i^*},{j^*})$ and $d_{G({i^*},{j^*})}(v_{c^*},v_i)=d_P(v_{i^*},v_{c^*})+|e(v_{i^*},v_{j^*})|+d_P(v_{j^*},v_i)$. We will use these properties below without further explanations.

Next we prove $d_{G({c^*},{j^*})}(v_{c^*},v_k)\leq r^*$.

If $k\in [1,c^*]$, then $d_{G({c^*},{j^*})}(v_{c^*},v_k)\leq d_P(v_k,v_{c^*})=d_{G({i^*},{j^*})}(v_{c^*},v_k)\leq r^*$.

If $k\in (c^*,j^*)$, then both $v_{c^*}$ and $v_k$ are in the cycle $C({i^*},{j^*})$ of the graph $G({i^*},{j^*})$ and are also in the cycle $C({c^*},{j^*})$ of the graph $G({c^*},{j^*})$. By a similar analysis as that for Lemma~\ref{lem:40}, $d_{G({c^*},{j^*})}(v_{c^*},v_k)\leq d_{G({i^*},{j^*})}(v_{c^*},v_k)\leq r^*$.

If $k\in [j^*,n]$, then by the triangle inequality, we have $d_{G({c^*},{j^*})}(v_{c^*},v_k)\leq |e(v_{c^*},v_{j^*})|+d_P(v_{j^*},v_k)\leq d_P(v_{i^*},v_{c^*})+ |e(v_{i^*},v_{j^*})|+d_P(v_{j^*},v_k)=d_{G({i^*},{j^*})}(v_{c^*},v_k)\leq r^*$.

The lemma thus follows.
\end{proof}

%Because $(c^*,j^*)$ is also an optimal solution and it is a configuration of Case 1.1, the candidate solution found by our algorithm for Case 1.1 is also an optimal solution and thus we do not need find a candidate solution for this case any more.

\paragraph{Summary.} We have computed a candidate solution for each of Case~1.1 and Case~1.2. Each candidate solution is also a feasible one. We have proved that if an optimal solution belongs to one of the two cases, then the corresponding candidate solution must also be an optimal solution. On the other hand, we have shown that other cases can be reduced to the two cases. Therefore, one of the two candidate solutions must be an optimal one. As a final step of our algorithm, among the two candidate solutions, we return the one with smaller candidate radius as our optimal solution. The running time of the entire algorithm is $O(n)$.

\begin{theorem}
The discrete-ROAP problem can be solved in linear time.
\end{theorem}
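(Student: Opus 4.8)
The plan is to prove the theorem by assembling the case analysis and the per-configuration algorithms established above into a single linear-time procedure; given the preceding lemmas, the remaining work is an organizational correctness-and-timing argument rather than new computation.

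First I would argue that the enumeration of configurations is exhaustive: every optimal solution $G(i^*,j^*)$ conforms to at least one of Cases~1.1, 1.2, 2.1.1, 2.1.2, 2.2, or 2.3. This follows because the top-level split on whether $c^*\in(i^*,j^*)$ is a dichotomy, and within each branch the subcases exhaust the admissible positions of $a^*$ (using the earlier observation that when $c^*\in(i^*,j^*)$ the farthest vertex $a^*$ cannot lie in $(1,i^*]\cup[j^*,n)$) together with whether $e(v_{i^*},v_{j^*})\subseteq\pi^*$. Since there are only finitely many candidate edges, an optimal solution certainly exists, so it conforms to at least one configuration.

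Next I would invoke the reductions. By Lemmas~\ref{lem:40}, \ref{lem:50}, and~\ref{lem:60} (and the symmetric statements for Case~2.2), whenever an optimal solution conforms to a Case~2 configuration there is an optimal solution conforming to Case~1.1 or Case~1.2. Consequently some optimal solution always conforms to Case~1.1 or Case~1.2. For those two cases, the algorithm of Lemma~\ref{lem:10}, combined with the $O(n)$-time computation of $\lambda_i$ and $j_i$ from~\cite{ref:JohnsonAL19}, produces in $O(n)$ time a feasible candidate $(i',j')$, $c'$, $r'$ whose radius $r'$ equals $r^*$ whenever the optimal solution conforms to that case, as guaranteed by the preceding correctness lemma and its Case~1.2 analogue.

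Finally I would combine the two candidates: run the Case~1.1 and Case~1.2 algorithms and return whichever candidate has the smaller radius. Both candidates are feasible, so neither radius can drop below $r^*$; and at least one of them equals $r^*$ by the previous paragraph, so the minimum of the two radii is exactly $r^*$ and the returned edge is optimal. Each ingredient---prefix-sum preprocessing, the computation of $\lambda_i$ and $j_i$, Lemma~\ref{lem:10}, and the final comparison---runs in $O(n)$ time, and there is only a constant number of them, so the total is $O(n)$. Taking the earlier lemmas as given, this assembly is routine; the genuinely delicate step lies upstream, in the reduction Lemmas~\ref{lem:40}--\ref{lem:60}, where one must check that contracting the added edge to have an endpoint at $v_{c^*}$ never pushes any distance from $v_{c^*}$ above $r^*$. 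That verification rests on the triangle inequality and on $\pi^*$ being a shortest path, and it is precisely there that the metric-space hypotheses are used.
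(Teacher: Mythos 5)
Your proposal is correct and follows essentially the same route as the paper's own summary argument: exhaustiveness of the configurations, reduction of all Case~2 subcases to Cases~1.1 and~1.2 via Lemmas~\ref{lem:40}--\ref{lem:60} and symmetry, feasibility of both candidates, and returning the one with smaller radius, all in $O(n)$ time. No gaps.
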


% \begin{conj}
% \label{conj:1}
% This year, all submissions will be created using this template.
% \end{conj}

% \begin{obs}
% The statement in Conjecture~\ref{conj:1} did not hold true last year.
% \end{obs}

% \section{Results}
% If some proofs and technicalities do not fit in these 6 pages, leave them for the appendix.
% We expect all proofs to be available for the reviewers.

% \subsection{Preliminary results}
% \begin{lemma}
% \label{lem:abc}
% Here is a lemma.
% \end{lemma}
% The proof is omitted due to lack of space. It can be found in the appendix, and in our publicly available tech report.  %But do include the proof if you can.

% \begin{theorem}
% Here is a  theorem.
% \end{theorem}
% \begin{proof}
% This relies on  Lemma~\ref{lem:abc}. %notice that there is no blank line between here and \end{proof}
% \end{proof}

% \subsection{Algorithm}

% \begin{enumerate}
% \item
% Do nothing.
% \item
% Go to step 1.
% \end{enumerate}

\section{Concluding Remarks}% or Conclusion % or Discussion  %  or just leave it out
\label{sec:conclude}

We presented a linear time algorithm for solving the discrete radius optimally augmenting path problem, which is optimal and matches the time complexity of the algorithm in~\cite{ref:JohnsonAL19} for the continuous version of the problem. As the algorithm in~\cite{ref:JohnsonAL19}, our algorithm uses configurations but the number of configurations in our algorithm is much fewer. From this point of view, it seems that the discrete problem is easier. However, this may not be the case due to the following.

Consider the problem of computing the radius for the resulting graph $G(i,j)$ after a new edge $e(i,j)$ is added to $P$ (e.g., see Fig.~\ref{fig:introduction}). For simplicity, let's consider an easier problem of computing the radius of the cycle $C(i,j)$. For the continuous problem, it is easy to see that the radius is equal to $|C(i,j)|$ minus the maximum length of all edges of $C(i,j)$. Therefore, once the maximum edge length of the subpath $P(v_i,v_j)$ is known (we then compare it with the length of $e(i,j)$), the radius of $C(i,j)$ can be computed in $O(1)$ time. As for computing the maximum edge length of $P(v_i,v_j)$, as shown in~\cite{ref:JohnsonAL19}, this can be trivially done in $O(\log n)$ time, after $O(n)$ time preprocessing on $P$. This is one reason that Johnson and Wang \cite{ref:JohnsonAL19} were able to provide an efficient algorithm for the {\em query version} of the problem: After $O(n)$ time preprocessing, given any query index pair $(i,j)$, the radius of the graph $G(i,j)$ can be computed in $O(\log n)$ time.

\begin{figure}[t]
    \centering
    \includegraphics[scale=0.55]{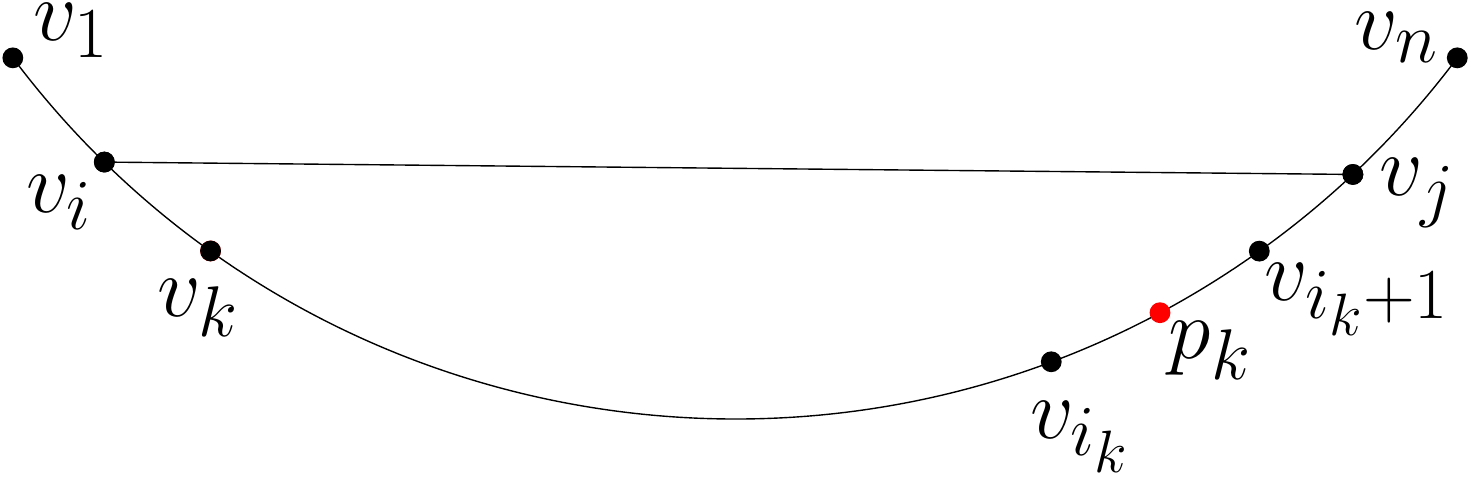}
    \caption{Illustrating the graph $G(i,j)$. $p_k$ is the diametral point of $v_k$ in the cycle $C(i,j)$.}
    \label{fig:introduction}
\end{figure}

Now consider the same problem (i.e., computing the radius of $C(i,j)$) for the discrete version. For each vertex $v_k\in C(i,j)$, let $p_k$ denote a conceptual point on an edge of $C(i,j)$ such that the length of the subpath of $P(v_k,p_k)$ between $v_k$ and $p_k$ is equal to $|C(i,j)|/2$, i.e., $p_k$ is the diametral point of $v_k$ in $C(i,j)$. Let $e(v_{i_k},v_{i_k+1})$ be the edge containing $p_k$. Define $l_k=\max\{d_{C(i,j)}(v_k,v_{i_k}),d_{C(i,j)}(v_k,v_{i_k+1})\}$. Then, it is easy to see that the radius of $C(i,j)$ is equal to $\min_{k\in [i,j]}l_k$. Although it is straightforward to compute the radius of $C(i,j)$ in $O(n)$ time, it is elusive to us whether it is possible to do so in $O(\log n)$ time (or just sub-linear time) with $O(n)$ time preprocessing. Unlike the continuous problem, this is an obstacle to solving the query problem in sub-linear time if only $O(n)$ (or slightly more) preprocessing time is allowed. We leave it as an open problem.

% Please avoid changing anything in this template that will cause the fonts and margins to look different.\\
% You may use up to 6 pages, not including references or the appendix.
% Use pdflatex.\\

%---------------------------- Bibliography -------------------------------

% Please add the contents of the .bbl file that you generate,  or add bibitem entries manually if you like.
% The entries should be in alphabetical order
\footnotesize
\bibliographystyle{abbrv}
\bibliography{reference}

\begin{thebibliography}{10}

\bibitem{ref:AlonDe00}
N.~Alon, A.~Gy\'{a}rf\'{a}s, and M.~Ruszink\'{o}.
\newblock Decreasing the diameter of bounded degree graphs.
\newblock {\em Journal of Graph Theory}, 35:161--172, 2000.

\bibitem{ref:BaeSh17}
S.~Bae, M.~de~Berg, O.~Cheong, J.~Gudmundsson, and C.~Levcopoulos.
\newblock Shortcuts for the circle.
\newblock In {\em Proceedings of the 28th International Symposium on Algorithms
  and Computation (ISAAC)}, pages 9:1--9:13, 2017.

\bibitem{ref:BiloAl18}
D.~Bil{\`o}.
\newblock Almost optimal algorithms for diameter-optimally augmenting trees.
\newblock In {\em Proceedings of the 29th International Symposium on Algorithms
  and Computation (ISAAC)}, pages 40:1--40:13, 2018.

\bibitem{ref:BiloIm12}
D.~Bil\`{o}, L.~Gual\`{a}, and G.~Proietti.
\newblock Improved approximability and non-approximability results for graph
  diameter decreasing problems.
\newblock {\em Theoretical Computer Science}, 417:12--22, 2012.

\bibitem{ref:DeCarufelMi16}
J.-L. {De Carufel}, C.~Grimm, A.~Maheshwari, and M.~Smid.
\newblock Minimizing the continuous diameter when augmenting paths and cycles
  with shortcuts.
\newblock In {\em Proceedings of the 15th Scandinavian Workshop on Algorithm
  Theory (SWAT)}, pages 27:1--27:14, 2016.

\bibitem{ref:DeCarufelMi17}
J.-L. {De Carufel}, C.~Grimm, S.~Schirra, and M.~Smid.
\newblock Minimizing the continuous diameter when augmenting a tree with a
  shortcut.
\newblock In {\em Proceedings of the 15th Algorithms and Data Structures
  Symposium (WADS)}, pages 301--312, 2017.

\bibitem{ref:FratiAu15}
F.~Frati, S.~Gaspers, J.~Gudmundsson, and L.~Mathieson.
\newblock Augmenting graphs to minimize the diameter.
\newblock {\em Algorithmica}, 72:995--1010, 2015.

\bibitem{ref:GaoTh13}
Y.~Gao, D.~Hare, and J.~Nastos.
\newblock The parametric complexity of graph diameter augmentation.
\newblock {\em Discrete Applied Mathematics}, 161:1626--1631, 2013.

\bibitem{ref:GrobeFa15}
U.~Gro{\ss}e, J.~Gudmundsson, C.~Knauer, M.~Smid, and F.~Stehn.
\newblock Fast algorithms for diameter-optimally augmenting paths.
\newblock In {\em Proceedings of the 42nd International Colloquium on Automata,
  Languages and Programming (ICALP)}, pages 678--688, 2015.

\bibitem{ref:GrobeFa16}
U.~Gro{\ss}e, J.~Gudmundsson, C.~Knauer, M.~Smid, and F.~Stehn.
\newblock Fast algorithms for diameter-optimally augmenting paths and trees.
\newblock {\em arXiv:1607.05547}, 2016.

\bibitem{ref:IshiiAu13}
T.~Ishii.
\newblock Augmenting outerplanar graphs to meet diameter requirements.
\newblock {\em Journal of Graph Theory}, 74:392--416, 2013.

\bibitem{ref:JohnsonAL19}
C.~Johnson and H.~Wang.
\newblock A linear-time algorithm for radius-optimally augmenting paths in a
  metric space.
\newblock In {\em Proceedings of the 16th Algorithms and Data Structures
  Symposium (WADS)}, pages 466--480, 2019.

\bibitem{ref:LiOn92}
C.-L. Li, S.~McCormick, and D.~Simchi-Levi.
\newblock On the minimum-cardinality-bounded-diameter and the
  bounded-cardinality-minimum-diameter edge addition problems.
\newblock {\em Operations Research Letters}, 11:303--308, 1992.

\bibitem{ref:OhA16}
E.~Oh and H.-K. Ahn.
\newblock A near-optimal algorithm for finding an optimal shortcut of a tree.
\newblock In {\em Proceedings of the 27th International Symposium on Algorithms
  and Computation (ISAAC)}, pages 59:1--59:12, 2016.

\bibitem{ref:SchooneDi97}
A.~Schoone, H.~Bodlaender, and J.~V. Leeuwen.
\newblock Diameter increase caused by edge deletion.
\newblock {\em Journal of Graph Theory}, 11:409--427, 1997.

\bibitem{ref:WangAn18}
H.~Wang.
\newblock An improved algorithm for diameter-optimally augmenting paths in a
  metric space.
\newblock {\em Computational Geometry: Theory and Applications}, 75:11--21,
  2018.

\end{thebibliography}
% \begin{thebibliography}{99}

% \bibitem{so2005}
% C. So and H. So.
% \newblock A groundbreaking result.
% \newblock {\em Journal of Everything}, 59(2):23--37, 2005.

% \end{thebibliography}

% \newpage
% \section*{Appendix}
% Submissions should not exceed six pages (excluding references), must be submitted electronically, and must be prepared using LaTeX; using this template. Authors who feel that additional details are necessary should include a clearly marked appendix, which will be read at the discretion of the Program Committee. Each submission will be refereed by at least three members of the Program Committee. Details on the submission procedure are outlined on the conference website. Six-page papers accepted at CCCG will appear in the electronic proceedings of the conference, provided that they are presented by a speaker at the conference.

\end{document}